\documentclass[%
 reprint,
superscriptaddress,
nofootinbib,
 amsmath,amssymb,
 aps,
 prr,
longbibliograohy,
]{revtex4-2}
\usepackage{bbold}
\usepackage[breaklinks=true,colorlinks=true
,urlcolor=blue
,anchorcolor=blue
,citecolor=blue
,filecolor=blue
,linkcolor=blue
,menucolor=blue
,pagecolor=blue
,linktocpage=true
,pdfproducer=medialab
,pdfa=true
]{hyperref}
\usepackage{graphicx,hyperref,color}
\usepackage{color}
\usepackage[dvipsnames,svgnames]{xcolor}
\usepackage{graphicx}
\usepackage{dcolumn}
\usepackage{bm}
\usepackage{bbm}
\usepackage{mathtools}
\usepackage{amsmath,braket}
\usepackage{amssymb,physics}
\usepackage{float}

\usepackage{tikz}
\usepackage{circuitikz}
\usetikzlibrary{shapes.geometric} 
\usetikzlibrary{3d,calc}
\usepackage{pgfplots}
\pgfplotsset{compat=1.17}

\usetikzlibrary{decorations.pathreplacing} 
\usetikzlibrary{decorations.markings}
\tikzset{snake it/.style={decorate, decoration=snake}}

\def\ep{{\epsilon}}

\def\frac#1#2{{#1\over #2}}

\def\s{\sqrt}

\def\Re{{\rm Re}}

\def\be{\begin{equation}}
\def\ee{\end{equation}}
\def\ba{\begin{eqnarray}}
\def\ea{\end{eqnarray}}

\def\ti{\tilde}
\def\ap{\alpha}

\def\ddd{\cdot\cdot\cdot}
\def\no{\nonumber \\}

\def\la{\langle}
\def\lb{\rangle}
\def\ep{\epsilon}

\def\vp{\varphi}

\allowdisplaybreaks[4]

\usepackage{amsthm}
\newtheoremstyle{break}
  {\topsep}  
  {\topsep}   
  {}          
  {}          
  {\bfseries} 
  {.}         
  {\newline}  
  {} 
\theoremstyle{break}
\newtheorem{dfn}{Definition}[section]

\newtheorem{thm}[dfn]{Theorem}

\begin{document}
\begin{flushright}
YITP-26-68
\\
\end{flushright}
\title{{\large Holographic Dual of PT Symmetric BCFT}}

\author{Ryota Maeda}
\affiliation{Center for Gravitational Physics and Quantum Information, Yukawa Institute for Theoretical Physics, Kyoto University, Kitashirakawa Oiwakecho, Sakyo-ku, Kyoto 606-8502, Japan}

\author{Nanami Nakamura}
\affiliation{Center for Gravitational Physics and Quantum Information, Yukawa Institute for Theoretical Physics, Kyoto University, Kitashirakawa Oiwakecho, Sakyo-ku, Kyoto 606-8502, Japan}

\author{Tadashi Takayanagi}
\affiliation{Center for Gravitational Physics and Quantum Information, Yukawa Institute for Theoretical Physics, Kyoto University, Kitashirakawa Oiwakecho, Sakyo-ku, Kyoto 606-8502, Japan}
\affiliation{Inamori Research Institute for Science, 620 Suiginya-cho, Shimogyo-ku, Kyoto 600-8411, Japan}


\begin{abstract}
We present a holographic dual of a two dimensional conformal field theory with non-hermitian but Parity-Time (PT) symmetric boundary conditions, by applying the AdS/BCFT duality and by introducing an imaginary valued scalar field localized on an end-of-the-world brane. We find that as we increase the strength of the non-hermitian PT symmetric interactions, the system experiences a spontaneous PT symmetry breaking. We also consider its Wick rotated setup as a new quantum quenched state and show that its growth of entanglement entropy can be larger than the standard results obtained from standard Cardy states.
\end{abstract}

\maketitle




{\bf 1. Introduction}

Even though the hermitian property of a given Hamiltonian guarantees that the energy spectrum is real valued, it is known that the opposite is not true. Even if the Hamiltonian is not hermitian $H^\dagger\neq H$, it can have a real valued spectrum if the Hamiltonian is pseudo hermitian i.e. $H^\dagger=\tau H \tau^{-1}$ for a certain hermitian operator $\tau$. A typical such example is the PT invariant quantum system, where $P$ is the parity and $T$ is the time reversal operator \cite{Bender:1998ke,Bender:2007nj}. An interesting class of such examples is the PT invariant spin chains, where we can observe remarkable phase transitions due to the spontaneous PT symmetry breaking which change real valued energy spectra into complex ones \cite{Korff:2008xx,Fring:2012id,Kattel:2023ras,Kattel:2024lot}. Moreover, in the context of conformal field theory (CFT), we can find the PT invariant non-hermitian systems in the non-unitary minimal models \cite{Castro-Alvaredo:2017udm,Fukusumi:2025fir,Fukusumi:2025xrj,Diatlyk:2026oxm}.
These and other interesting class of examples, pioneered by \cite{Hatano_1996}, have opened up an intriguing research area of non-hermitian quantum systems
\cite{Ashida:2020dkc}. Refer to \cite{Gardas:2016gja,Cao:2023mfy} for 
thermal or finite temperature aspects of such non-hermitian systems. 

Motivated by these successful developments in non-hermitian quantum systems, it is intriguing to consider their gravity duals via the holography \cite{tHooft:1993dmi,Susskind:1994vu}, by expanding the examples of the AdS/CFT \cite{Maldacena:1997re,Gubser:1998bc,Witten:1998qj}.
If this is concretely realized, we can efficiently analyze strongly interacting non-hermitian quantum systems from classical gravity calculations.
There have been several works in this direction to construct gravity duals of PT invariant CFTs \cite{Arean:2019pom,Arean:2024gks,Xian:2023zgu} (see also \cite{Faedo:2019nxw,Ghodrati:2025fah,Song:2022awq,Begines:2026mlv} for closely related works). In these works, typically bulk matter fields are chosen to be complex valued so that they are dual to the non-hermitian deformations of the original hermitian CFTs, which is translationally invariant. Due to the back-reactions via the bulk Einstein equation, the bulk AdS metric gets also complex valued in the PT broken phases. Since we still do not have a solid theory of holography with a complexified metric, this involves major challenges. Thus it is helpful to find well-controllable examples.

In this letter, instead, we introduce the non-hermitian interactions only at the boundary $x=x_*$ and $x=-x_*$, described by a boundary primary operator $O(\tau,y_1,\ddd,y_{d-2})$,
keeping the $d$ dimensional spacetime to be hermitian where a CFT is defined. 
The CFT action looks like
\ba
&& S_{CFT}=S^{(0)}_{CFT}+i\lambda\int_{x=x_*} d\tau d^{d-2}y O(\tau,y)\no
&&\ \ \ -i\lambda\int_{x=-x_*} d\tau d^{d-2}y O(\tau,y).
\label{BCFTaction}
\ea
Here $S^{(0)}_{CFT}$ is the action of a CFT with its anti de Sitter space (AdS) dual, whose Hamiltonian is hermitian. The deformation proportional to the real valued parameter $\lambda$ makes the Hamiltonian non-hermitian. However this system is PT invariant because the Parity exchanges the boundary at $x=x_*$ with the other one at $x=-x_*$, while the time reversal, which is anti-linear, maps $i\lambda$ to $-i\lambda$. In quantum spin chains, such PT invariant models with the boundary non-hermitian interactions have been considered as typical examples \cite{Korff:2008xx,Fring:2012id,Kattel:2023ras,Kattel:2024lot}, where spontaneous PT symmetry breakings were observed. In two dimensional BCFTs, the simplest example will be a free $c=1$ massless scalar field $X$ on an interval $[0,\pi]$ with two Dirichlet boundary conditions. If we impose $X(0)=i\lambda$ and $X(\pi)=-i\lambda$, this gives a PT symmetric non-hermitian BCFT. However, in this example the energy spectrum is clearly real-valued and bounded from below $E\geq -\frac{\lambda^2}{2\pi^2}$, which is always PT symmetric (see the appendix \ref{ap:free}).

This motivates us to study this PT invariant system (\ref{BCFTaction}) because holography allows us to explore strongly coupled dynamics. In this paper, we will find a class of gravity duals under such deformations by employing the AdS/BCFT construction \cite{Takayanagi:2011zk,Fujita:2011fp,Karch:2000gx}. For this, we introduce the end-of-the-world brane (EOW brane) in AdS geometry so that the gravity dual of the BCFT is given by the region surrounded by the union of the asymptotic AdS boundary and the EOW brane. To describe the boundary scalar operator deformation (\ref{BCFTaction}) at $x=\pm x_*$, we introduce a scalar field localized on the EOW brane \cite{Kanda:2023zse,Kanda:2023jyi} and allow it to take imaginary values.

For simplicity, we will choose $d=2$ and assume that the primary operator $O(\tau)$ is exactly marginal. 
The latter assumption guarantees that the theory is given by the boundary conformal field theory (BCFT) as it preserves a part of conformal symmetry even at the boundaries. In the AdS/BCFT construction, we consider a three dimensional pure AdS gravity and insert the EOW brane with a massless scalar field $\phi$ localized on it, which couples to the boundary primary $O(\tau)$ at $x=\pm x_*$ as in (\ref{BCFTaction}). We impose the boundary condition of this scalar field as 
\ba
\phi(\tau,x_*)=i\lambda,\ \ \ \ \phi(\tau, -x_*)=-i\lambda,
\label{nhbc}
\ea
which makes $\phi$ imaginary valued.
This model has the advantage that even though we expect backreactions due to the EOW brane in general, we can always rewrite the three dimensional bulk metric into the pure AdS$_3$ one. Moreover, under a Wick rotation, this model provides a new class of quantum quenches as we will see later. \\

{\bf 2. Holographic Model for PT invariant BCFT}

We consider the two dimensional BCFT (\ref{BCFTaction}) at finite temperature $T=1/\beta$ described by the Euclidean flat coordinate $(\tau,x)$ on a cylinder, where $\tau$ is the Euclidean time coordinate.
We also write the width between the two boundaries by 
$\Delta x$. Its gravity model dual is constructed via the AdS/BCFT \cite{Takayanagi:2011zk,Fujita:2011fp,Kanda:2023zse} as follows. The gravity dual is given by a three dimensional bulk region $M$ which is surrounded by the AdS boundary $\Sigma$ and an end-of-the-world (EOW) brane $Q$. The BCFT lives on the two dimensional cylinder,  defined by the AdS boundary ($z=\ep$) of the three dimensional bulk $M$.
We also assume a massless scalar field $\phi$ localized on the EOW brane. 
The total gravity action reads
\begin{equation}
  \begin{aligned}
    I &= I_{\mathrm{EH}} + I_{\mathrm{GHY}} + I_{\mathrm{brane}}, \\
    I_{\mathrm{EH}} &= - \frac{1}{16 \pi G_N} \int_M \dd[3]{x}  \sqrt{g} (R - 2 \Lambda), \\
    I_{\mathrm{GHY}} &= - \frac{1}{8 \pi G_N} \int_{\Sigma} \dd[2]{x} \sqrt{h} K, \\
    I_{\mathrm{brane}} &= - \frac{1}{8 \pi G_N} \int_Q \dd[2]{x} \sqrt{h} (K - h^{ab} \partial_a \phi \partial_b \phi ),   \label{onacn}
  \end{aligned}
\end{equation}
where $g_{\mu \nu}$ is the bulk metric, $h_{ab}$ is the induced metric on $\Sigma$ or $Q$, and $K_{ab}$ is the extrinsic curvature. 
$\Lambda$ is a negative cosmological constant, written by using AdS radius $l$ as $\Lambda = -l^{-2}$. 
Here and hereafter we set $l$ = 1. 

Though we impose Dirichlet boundary condition $\delta h_{ab}=0$ on asymptotic AdS boundary $\Sigma$,  we choose the Neumann one on the EOW brane $Q$: 
\begin{equation}
  K_{ab} - K h_{ab} + (h^{cd} \partial_c \phi \partial_d \phi) h_{ab} - 2 \partial_a \phi \partial_b \phi = 0. \label{neumann}
\end{equation}
This equation governs the dynamics of EOW brane. Note that 
the equation of motion for the scalar field $\phi$ follows from a linear combination of (\ref{neumann}). 

At finite temperature, as a solution to the three dimensional Einstein equation, there are two well known geometries; thermal AdS (TAdS) and BTZ black hole, whose metrics read
\ba
&& \mbox{TAdS:}\ \ \ \    ds^2 = \frac{d \tau^2}{z^2} + \frac{d z^2}{h(z) z^2} + \frac{h(z)}{z^2} dx^2, \label{TAdSM} \\
&& \mbox{BTZ:}\ \ \ \  ds^2 = \frac{h(z)}{z^2} d \tau^2 + \frac{d z^2}{h(z) z^2} + \frac{dx^2}{z^2},\no
&& \mbox{where we set}\ \ \ \ h(z) = 1 - \frac{z^2}{a^2}. \label{BTZM}
\ea
In the TAdS geometry,  the $x$ direction must have periodicity $2 \pi a$ to avoid a conical singularity at $z = a$. In the BTZ geometry, we have to compactify $\tau$ direction as $\tau \sim \tau + 2 \pi a$. 

They undergo a phase transition at some temperature \cite{Hawking:1982dh}. A similar phase transition occurs in the AdS/BCFT \cite{Takayanagi:2011zk, Fujita:2011fp}. In the presence of a scalar field localized on the EOW brane, the gravity dual has a richer structure in that there is a phase transition from the confined (thermal AdS) phase to the deconfined (BTZ) phase as we increase the difference $\Delta \phi$ of the scalar field between the two boundaries $x=x_*$ and $x=-x_*$ as found in \cite{Kanda:2023jyi,Kanda:2023jyi}. For our purpose we assume the scalar field $\phi$ is imaginary valued to satisfy the non-hermitian boundary condition (\ref{nhbc}). Thus we introduce the scalar field $\vp$ via 
\be
\phi=i\vp,  \label{imags}
\ee
which takes real values. As we will explain later, the BCFT lives on a single interval given by the union of $-\pi a\leq x\leq -x_*$ and $x_*\leq x\leq \pi a$ (note that we identify $x=\pi a$
with $x=-\pi a$). Therefore its width is given by 
\ba
\Delta x=2\pi a-2x_*,  \label{delatxd}
\ea
In our PT-invariant BCFT, owing to the conformal invariance, independent parameters which control the phases are the following two:
\begin{equation}
  \frac{\Delta x}{\beta} , \qquad \Delta \vp := \vp(x_*) - \vp(- x_*)=2\lambda.
\end{equation}
\\

{\bf 3. PT-invariant Thermal AdS Phase $0\leq \Delta\leq \vp_c$}

\begin{figure}
    \centering
    \includegraphics[width=4cm]{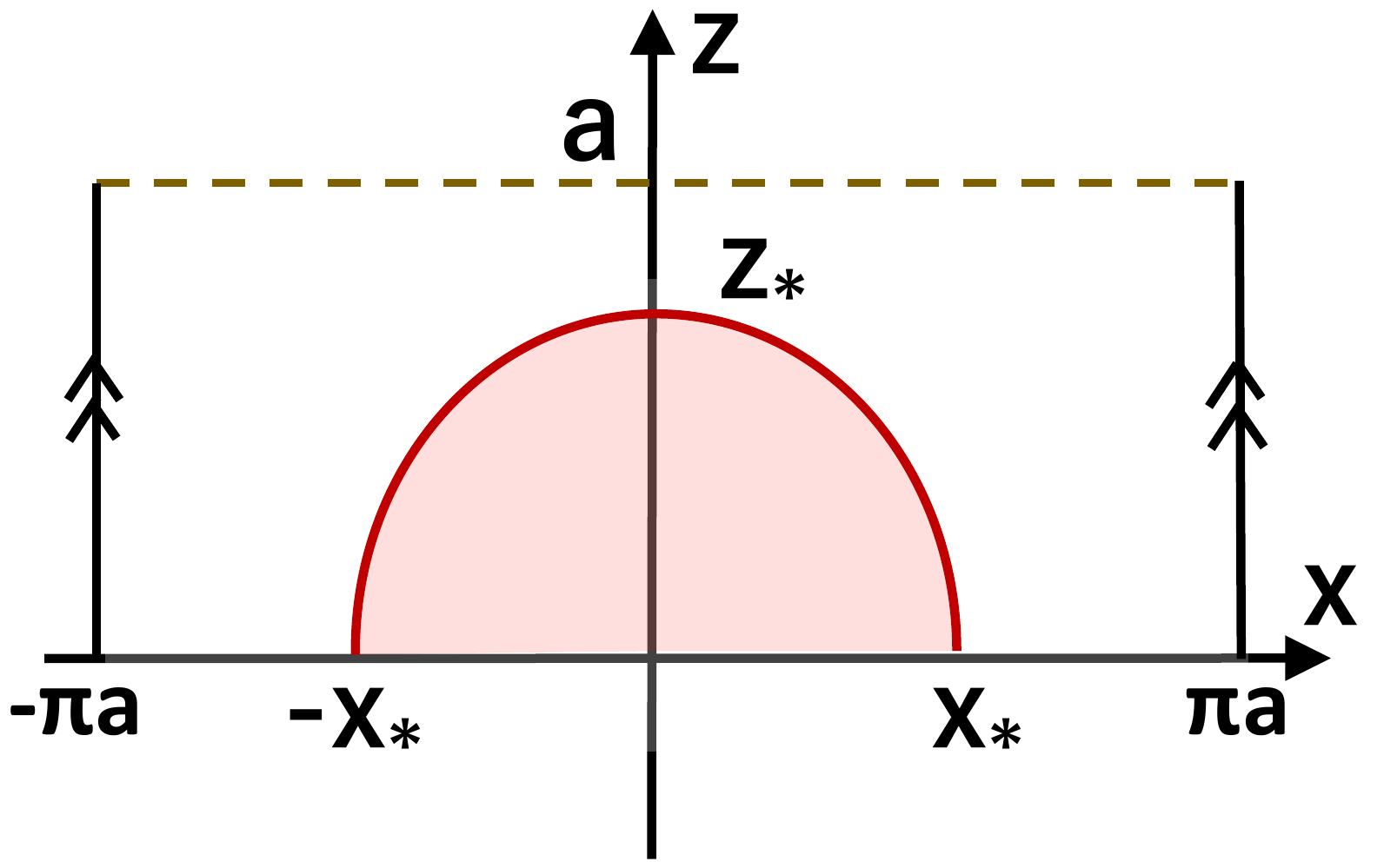}
        \includegraphics[width=4cm]{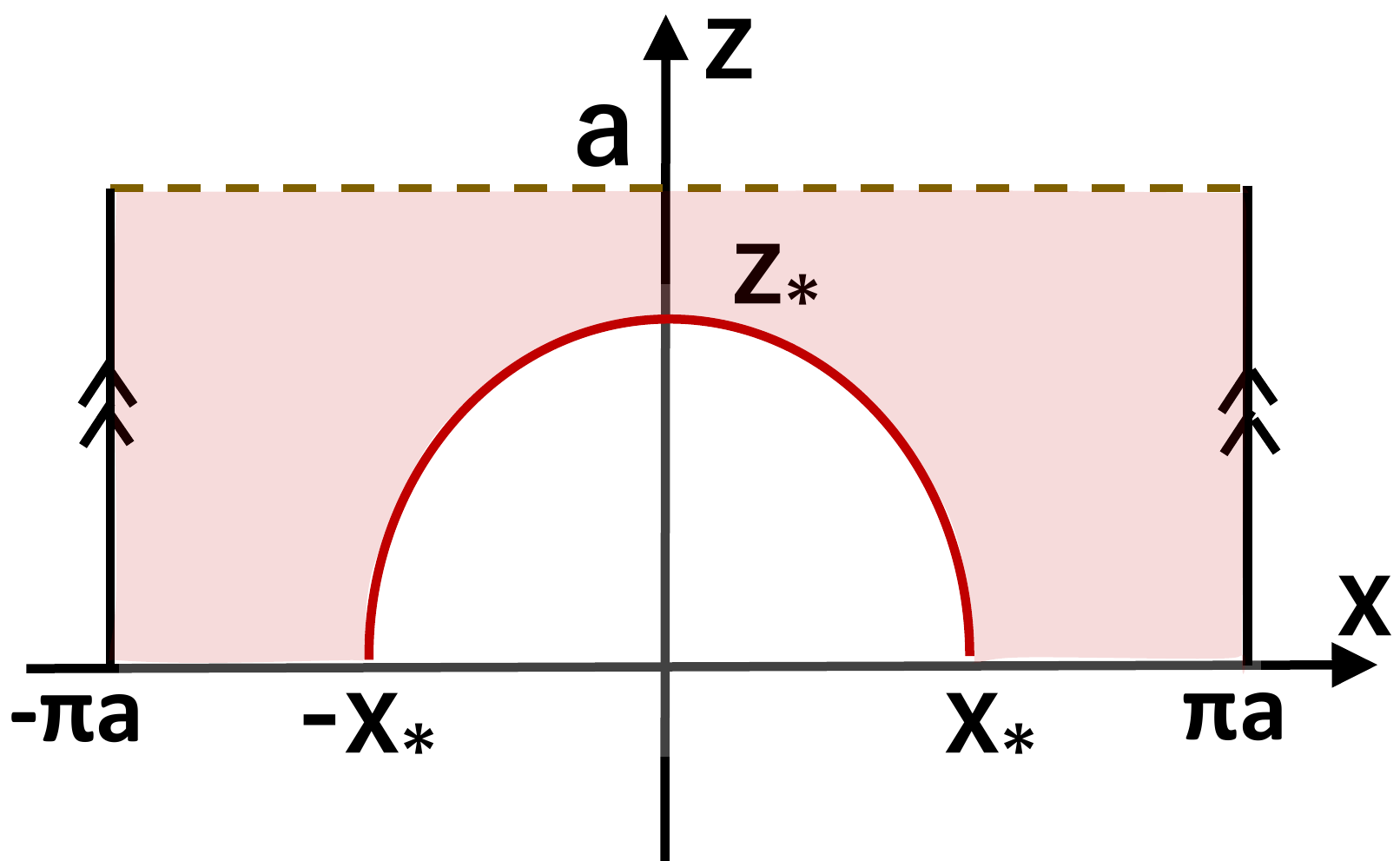}
    \caption{Sketches of gravity duals of the BCFT with the asymmetric exactly marginal deformation (colored region), where the red curve describes the EOW brane. The left and right panel corresponds to the case where $\phi$ is real and imaginary, respectively. }
    \label{fig:TAdSEOW}
\end{figure}

In the thermal AdS$_3$ (\ref{TAdSM}), 
the equation of motion in AdS/BCFT for a EOW brane with the localized scalar, given by (\ref{neumann}), reads (we write the derivative w.r.t $x$ as $\frac{dz}{dx}=\dot{z}$)
\ba
&& \eta\cdot \frac{2h^3-zh^2h'-3zh'\dot{z}^2+2h\dot{z}^2+2hz\ddot{z}}{2z^2h\left(h+\frac{\dot{z}^2}{h}\right)^{\frac{3}{2}}}=-\frac{h}{z^2+h^2}\dot{\phi}^2,\no
&& \eta\cdot \frac{h\s{h+\frac{\dot{z}^2}{h}}}{z^2}=\dot{\phi}^2,
\ea
where $\eta=\pm$ corresponds to the orientation of the EOW brane such that the gravity dual region is in the smaller $z$ side for $\eta=1$ and in the larger side for $\eta=-1$. To make $\phi$ real valued, $\eta=1$ was chosen in \cite{Kanda:2023zse}, where the gravity dual looks like the left panel of Fig.\ref{fig:TAdSEOW}. Instead, for our purpose of non-hermitian deformations, we assume $\phi$ is imaginary by setting (\ref{imags}). Thus we choose $\eta=-1$ and the (i.e. the right panel of Fig.\ref{fig:TAdSEOW}), which explains (\ref{delatxd}).
The brane profile is determined by
\ba
&& \frac{2h^3-zh^2h'-3zh'\dot{z}^2+2h\dot{z}^2+2hz\ddot{z}}{2z^2h\left(h+\frac{\dot{z}^2}{h}\right)^{\frac{3}{2}}}=-\frac{h}{z^2+h^2}\dot{\vp}^2,\no
&& \frac{h\s{h+\frac{\dot{z}^2}{h}}}{z^2}=\dot{\vp}^2.
\label{eomchi}
\ea
By introducing the potential (identical to the one in \cite{Kanda:2023zse})
\ba
U(z)=-\frac{9a^2z^2}{2}+\frac{9a^6}{2(a^2-z^2)},
\label{pote}
\ea
we can rewrite (\ref{eomchi}) as follows
\ba
&& \dot{z}=\frac{h^{\frac{3}{2}}\s{2U(z_*)-2U(z)}}{3z^2},\no
&& \dot{\vp}=\frac{h^{\frac{3}{4}}}{z}\left(1+\frac{2h}{9z^4}(U(z_*)-U(z))\right)^\frac{1}{4},
\ea
where $z_*$ is the turning point of the brane such that 
$x(z_*)=0$. The solution is parameterized by $z_*$ and the physically important values $x_*$ and $\Delta\vp$ are given by a function of its dimensionless ratio $s=\frac{z_*}{a}$ as
\ba
\frac{2x_*}{a}=X_i(s),
\ \ \ \ \Delta \vp=\Phi_i(s), \ \ (i=T, B),\label{txph}
\ea
where $i=T$ or $B$ denotes the thermal AdS or BTZ phase, respectively. In the thermal AdS phase, they are explicitly expressed as
\begin{align}
 X_T(s) &= \frac{2\sqrt{1-s^2}}{s}\left[\Pi\!\left(s^2,s^2-1\right)-K\!\left(s^2-1\right)\right],  \label{xts} \\
  \Phi_T(s) &= 2(1-s^2)^{\frac{1}{4}} K\qty(-(1-s^2)),
  \label{pts}
\end{align}
where $K(x)$ and $\Pi(x,y)$ are the complete elliptic integral of the first and the third kind, respectively.

When the deformation is small i.e. $\Delta \vp\ll 1$, we find approximately
$X_T(s)\simeq \pi-\s{2}\pi \s{s}$ and $\Phi_T(s)\simeq 
2^{\frac{1}{4}}\pi s^{\frac{1}{4}}$, leading to 
\ba
X_T(s)\simeq \pi-\frac{\Phi_T(s)^2}{\pi}.
\ea

On the other hand, there is an upper bound 
$\Delta \vp\leq \vp_c$ as long as we assume $z_*$ is real, given by
\ba
\vp_c=2K(-1)=\frac{2\s{\pi}\Gamma\left(5/4\right)}{\Gamma\left(3/4\right)}\simeq 2.622.
\ea
When we approach this upper bound, we obtain
$X_T(s)\simeq \frac{2\s{\pi}\Gamma(7/4)}{3\Gamma(5/4)}s$ and $\Phi_T(s)\simeq \vp_c-
\frac{\s{\pi}\Gamma(3/4)}{2\Gamma(1/4)}s^2$, leading to 
\ba
\Phi_T(s)\simeq \vp_c-
\frac{1}{4\ap}X_T(s)^2,
\ea
where we defined $\ap=\frac{\Gamma(1/4)}{2\s{\pi}\Gamma(3/4)}$.

As explained in appendix \ref{sec:action}, the on-shell action is given by
\begin{equation}
 I_T = - \frac{1}{16 \pi G_N} \cdot \frac{\beta}{\Delta x} \left( 2 \pi -  X_T [\Phi_T^{-1} (\Delta \vp)] \right)^2 \label{tadaction}
\end{equation}

In terms of energy of thermal AdS$_3$ phase in the low temperature limit, we find 
\ba
E_T=\lim_{\beta\to\infty}\frac{I_{T}}{\beta}=- \frac{c}{24 \pi\Delta x} \left( 2 \pi -  X_T [\Phi_T^{-1} (\Delta \vp)] \right)^2, \label{Teneg}
\ea
where $c$ is the central charge of the dual CFT.
This is plotted in Fig.\ref{fig:energya}.
When $\Delta\vp\ll 1$, we obtain 
$E_T\simeq -\frac{\pi c}{24\Delta x}\left(1+\frac{2}{\pi^2}(\Delta \vp)^2\right)$. On the other hand, in the limit $\Delta\vp\to \vp_c$, we find
\ba
E_T &\simeq &-\frac{\pi c}{6\Delta x}+\frac{c}{3\s{\ap}\cdot\Delta x}\s{\vp_c-\Delta\vp}. \label{chilimg}
\ea
This implies that the point $\Delta \vp=\vp_c$ is the exceptional point of our PT invariant system beyond which the energy becomes complex valued.\\

\begin{figure}
    \centering
    \includegraphics[width=8cm]{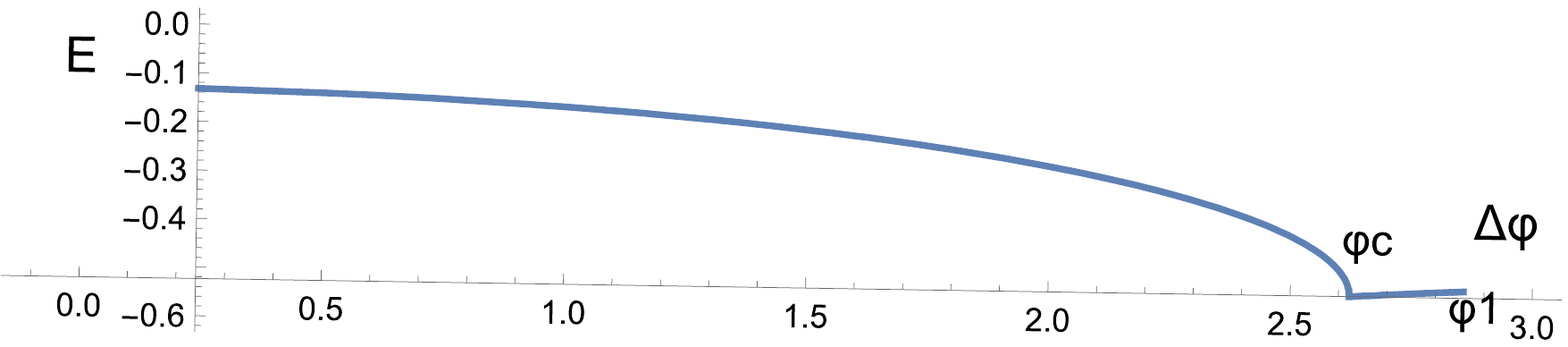}
    \caption{Plots of the energy $E_T$ (\ref{Teneg}) as a function of $\Delta \varphi$ for $0\leq \Delta\vp\leq \vp_1$. For $\vp_c<\Delta \vp\leq \vp_1$, where $E$ becomes complex, we plotted the real part of $E$. We set $\Delta x=1$ and $c=1$.}
    \label{fig:energya}
\end{figure}


{\bf 4. PT-violating TAdS Phase $\vp_c<\Delta\vp\leq \vp_1$}

As we have seen, there is an upper bound $\Delta \vp\leq \vp_c$ for the connected EOW brane solution. Our interpretation as a PT symmetric system implies that we will get into the spontaneous breaking of the PT symmetry for 
$\Delta \vp> \vp_c$. Indeed for this range, the solution becomes complex valued as we will explain below. For this, let us allow $z_*$ to take imaginary values. 

In order to have a real valued coordinate in the BCFT, which corresponds to the AdS boundary $z\to 0$, we introduce the real valued coordinate $(\ti{\tau},\ti{x},\ti{z})$ by the rotation
\ba
(\tau,x,z)=e^{i\theta}(\ti{\tau},\ti{x},\ti{z}), \label{rotatis}
\ea
keeping $a$ real valued, such that in the boundary limit $z\to 0$ we have $ds^2\simeq \ti{z}^{-2}(d\ti{\tau}^2+d\ti{x}^2+d\ti{x}^2)$. We identify $(\ti{\tau},\ti{x})$ at $\ti{z}=0$ with the coordinate of the BCFT. We choose the UV cut off by $\ti{z}=\ep$, which is a real infinitesimally small constant. The bulk metric is given by (\ref{TAdSM}) with the coordinate transformation (\ref{rotatis}) performed. Since $z_*$ and $s$ takes the imaginary values, $X_T(s)$ in (\ref{xts}) becomes pure imaginary, though we keep $\Phi_T(s)$ in (\ref{pts}) real valued. At $\Delta \vp=\Phi_T=\vp_c$, we have $s=0$. As $|s|$ gets larger,  $\Phi_T(s)$ increases until it reaches its maximum $\vp_1\simeq 2.858$ at $s\simeq 1.937i$. For larger $|s|$, it starts decreasing. Thus this complex solution only covers $\vp_c<\Delta \vp<\vp_1$. We plotted Im$x_*/a$ and $\Delta\vp$ as a function of $s$ in Fig.\ref{fig:enchiplot}.

The BCFT extends in the strip region given by $0\leq \ti{\tau}<\beta$ and $-\frac{\Delta x}{2}\leq \ti{x}\leq \frac{\Delta x}{2}$, where both $\beta$ and $\Delta x$ are real valued. In terms of the original space coordinate $x$, the BCFT lives on an tilted interval which is the union of two intervals $\left[-\pi a,-i|x_*|\right]$ and $\left[i|x_*|,\pi a\right]$. Note that we identify $x=-\pi a$ with $x=\pi a$ due to the periodicity. This leads to the phase angle 
\ba
e^{i\theta}=\frac{2\pi-i|X_T(s)|}{\s{4\pi^2+|X_T(s)|^2}}.
\ea
Note that $\Delta x$ is now defined as the absolute value i.e. $\Delta x/a=\s{4\pi^2+|X_T(s)|^2}$. 

From the metric after the coordinate transformation (\ref{rotatis}), the holographic energy stress tensor gets multiplied by the phase factor $e^{2i\theta}$.
In the end, the energy $E_T$ in this PT violating phase, turns out to be still given by the previous formula (\ref{Teneg}). However, in this imaginary $s$ solution, the energy becomes complex valued. This means that the PT symmetry gets spontaneously broken. Also notice that the configuration with the complex conjugate energy $E^*_T$ can be found simply by setting $i|z_*|\to -i|z_*|$.
The real part of $E_T$ is plotted in Fig.\ref{fig:energya}.\\

{\bf 5. PT-invariant BTZ Phase $0\leq \Delta\vp\leq\vp_1$}

Now let us move onto to the BTZ phase (\ref{BTZM}). The Neumann boundary condition (\ref{neumann}) for a connected EOW brane leads to
\ba
&& \dot{z}=\frac{\s{h(z)}\s{2U(z_*)-2U(z)}}{3z^2},\no
&& \dot{\vp}=\frac{1}{zh(z)^{\frac{1}{4}}}\left(1+\frac{2h}{9z^4}(U(z_*)-U(z))\right)^\frac{1}{4},
\ea
where the potential $U(z)$ is identical to (\ref{pote}).
By integrating these differential equations, we obtain 
$X_B(s)$ and $\Phi_B(s)$ defined in (\ref{txph}) as follows:
\begin{equation}
    \begin{aligned}
    X_B(s) &\!=\! \frac{2s}{\sqrt{1-s^2}\sqrt{2-s^2}}\!\left[\!\Pi\!\!\left(\frac{1-s^2}{2-s^2}\!,\!\frac{1}{2-s^2}\right)\!-\!K\!\left(\!\frac{1}{2-s^2}\!\right)\!\right],\\
    \Phi_B(s) &= \frac{2(1-s^2)^{\frac{1}{4}}}{\sqrt{2-s^2}}K \qty(\frac{1}{2-s^2}). 
    \end{aligned}
    \label{solbtz}
\end{equation}
We find $X_B(0)=X_B(1)=0$, $\Phi_B(0)=\vp_c$ and $\Phi(1)=0$. Moreover, both $X_B$ and $\phi_B$ take their maximum values $X_B\simeq 1.002$ and $\Phi_B\simeq 2.858$ at the same point $s\simeq 0.889$.

On the other hand, in the BTZ phase, we also have the disconnected configuration of EOW brane, consisting of two parallel branes each at $x=-x_*$ and $x=x_*$. We write the on-shell action of the connected EOW brane and the disconnected ones as $I_B$ and $I_D$, respectively.
As computed in appendix \ref{sec:action}, they are explicitly given by
\begin{equation}
  \begin{aligned}
      & \mbox{Connected:}\ \  I_B = - \frac{\pi\Delta x}{4 G_N\beta}- \frac{1}{2 G_N} X_B \left[ \Phi_B^{-1} (\Delta \varphi) \right], \\
      & \mbox{Disconnected:}\ \  I_D = - \frac{\pi \Delta x}{4 G_N \beta}.
  \end{aligned} \label{btzaction}
\end{equation}
Since $X_B$ is positive, we find that the connected EOW brane is always favored under our non-hermitian deformation.\\

{\bf 6. Phases for $\Delta\vp>\vp_1$}

We can consider the solution for even larger $\Delta \varphi$ by allowing $s$ take general complex values. 
In this region, not only TAdS but BTZ solutions have complex metrics, which we again interpret as the transition to PT-broken phase. 
The detailed computation is shown in Appendix.\ref{ap: solution}, and here we just indicate the basic idea and final results. 

We can write down $X_{T/B}$ and $\Phi_{T/B}$ explicitly by using some elliptic integral \eqref{xts}, \eqref{pts} \eqref{solbtz} (see also Appendix\ref{ap:analytical}), so that we can easily extend the domain of $s$ to full complex plane at least  numerically. 
Since we are considering real $\Delta \varphi$ setup, we restrict the domain on which $\Phi_{T/B}$ becomes real. 
This procedure gives some contours in the complex $s$ plane. 
Evaluating $\Phi_{T/B}$ along that curve shows that it indeed becomes real number, greater than $\varphi_1$. 
In the same way, evaluating $X_{T/B}$ along the curve and substituting it into on-shell action \eqref{tadaction}, \eqref{btzaction}, we can get free energy in this regime. 
Note that we also keep $\Delta x / \beta$ to be a real number. 

Same as the $\Delta \varphi < \varphi_1$ case, we have three phases: TAdS, BTZ with connected EOW brane, and BTZ with disconnected EOW brane. 
By minimizing the real part of free energy, we discuss the most favored phase for given $\Delta x / \beta$ and $\Delta \varphi$. 
Note that disconneted BTZ solution always have larger free energy than that of connected BTZ, so it never be dominant in our setup. 

The phase transition line between TAdS and connected BTZ is shown in Fig.\ref{fig:phased}. 
The phase diagram includes four phases: TAdS and BTZ, both have PT-invariant phase and PT-broken phase.\\

\begin{figure}
    \centering
    \includegraphics[width=9cm]{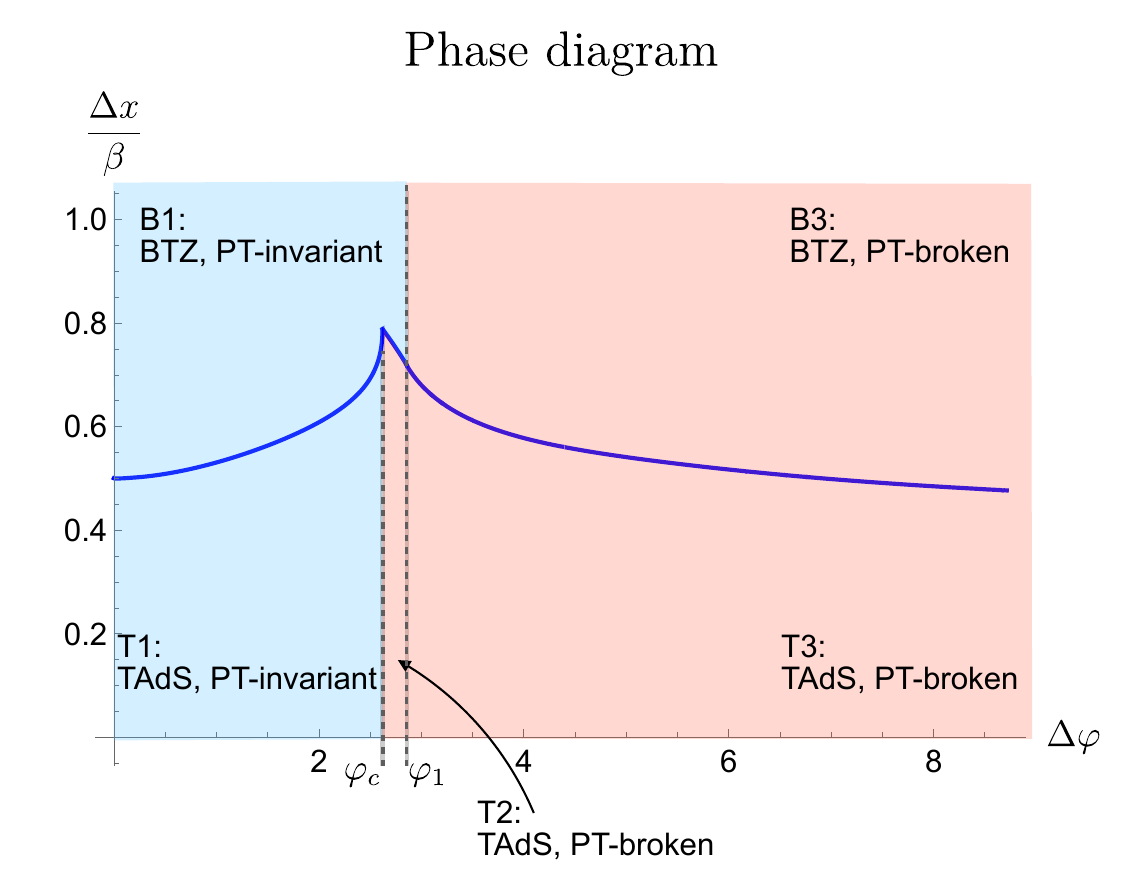}
    \caption{A sketch of phase diagram. The PT symmetry is spontaneously broken in the red colored phases, while it is unbroken in the blue colored ones.}
    \label{fig:phased}
\end{figure}

{\bf 7. Novel Quantum Quenches from BCFT}

Consider the thermal AdS phase of our set up with the imaginary scalar in the range $0\leq\Delta \vp\leq\vp_c$. We now perform the Wick rotation $\tau'=x$ and $x'=\tau$. At the time $\tau'=-\frac{\Delta x}{2}$ the state is given by the boundary state with the exactly marginal perturbation turned on:
\ba
|B(\Delta\vp/2)\lb=\exp\left[i\frac{\Delta\vp}{2}\int^{\infty}_{-\infty} dx'O(x')\right]|B_0\lb,
\ea
where $|B_0\lb$ is the Cardy state \cite{Cardy:1989ir} before the boundary deformation. 

Then the partition function on the strip $-\frac{\Delta x}{2}\leq \tau'\leq \frac{\Delta x}{2}$ is written as 
\ba
\la B(\Delta\vp/2)|e^{-\Delta x H}|B(\Delta\vp/2)\lb.
\ea
Here, we assume the standard hermitian conjugation to obtain $\la B(\Delta\vp/2)|$ and this flips the sign of $\vp$.  Thus our BCFT on the strip provides the genuine density matrix as opposed to the case where $\phi$ is real studied in \cite{Kanda:2023zse,Kanda:2023jyi}, which ends up with a transition matrix \cite{Nakata:2021ubr}. Therefore we can define the entanglement entropy for its time evolution
\ba
|B(t,\Delta\vp/2)\lb=e^{-itH}e^{-\frac{\Delta x}{2}H}|B(\Delta\vp/2)\lb.
\ea
Note that its density matrix is hermitian and is among regular quantum states in a unitary holographic CFT.

Now we consider the entanglement entropy for the subsystem $A$ given by a half of the total system. Then the time evolution of entanglement entropy 
for our boundary state is dictated by the value of $a$ as
\ba
\Delta S_A(t)\simeq \frac{c}{6a}t,
\ea
at late time. Here $\Delta S_A$ is the difference between the actual entanglement entropy and that for the CFT vacuum. In our Wick rotated setup we can regard the effective temperature $T_{\mbox{eff}}$ as 
\ba
T_{\mbox{eff}}=\frac{1}{2\Delta x}.
\ea
The derivation is essentially the same as the similar setup for the real valued $\phi$ studied in \cite{Kanda:2026jyk}, whose density matrices become non-hermitian as opposed to our case here.

Thus we obtain the late time behavior:
\ba
\Delta S_A(t)\simeq \frac{c}{3}\cdot T_{\mbox{eff}}\cdot \left(2\pi-X_T[\Phi^{-1}_T(\Delta\vp)]\right)\cdot t.
\ea
At $\Delta\vp=0$ we reproduce the known result $\Delta S_A(t)\simeq \frac{\pi c}{3\beta}t$ \cite{Calabrese:2005in,Hartman:2013qma}. However, the gradient of the linear growth increases as $\Delta\vp$ gets larger. It reaches the maximum at $\Delta\vp=\vp_c$, where we find the double growth: $\Delta S_A(t)\simeq \frac{2\pi c}{3\beta}t$. Even though the original argument in \cite{Calabrese:2005in,Hartman:2013qma} using the replica method and conformal map looks universal, this assumes that the upper and lower boundaries at $\tau'=\pm \frac{\Delta x}{2}$ map into a single boundary of the upper half plane. In general for $\Delta\vp\neq 0$, we need to insert a boundary condition changing operator as they have the marginal deformations with the different signs: $\pm i\frac{\Delta\vp}{2}\int dx' O(x')$.  Refer also to \cite{ZixiaA} for another realization of such a generalized global quench using de Sitter branes and to \cite{ZixiaB} for related setups in free fermion/boson CFTs as well as lattice models. See also  \cite{Kusuki:2023ckn} for the enhancement of effective temperature for generic quantum states. \\

{\bf 8. Discussions}

In this article, we presented an explicit and simple holographic dual of a two dimensional conformal field theory with PT symmetric boundary conditions, by applying the AdS/BCFT duality. We showed that as the strength of the non-hermitian PT symmetric interactions parametrized by $\Delta \vp$ increases, our system experiences a spontaneous PT symmetry breaking. We also considered its Wick rotated setup as new quantum quenched states and show that its growth of entanglement entropy can be larger than the standard results obtained from standard Cardy states. Clearly, our model can be generalized in many different directions: higher dimensional setup, or adding a potential term in the scalar field. The analysis of their phase structures will be an interesting future problem.

\vspace{5mm}
{\it Acknowledgments:}
We are grateful to Hiroki Kanda, Taishi Kawamoto, Shiraz Minwalla, Robert Myers, Shinsei Ryu, Kazuaki Takasan, Seiji Terashima and Zixia Wei for useful discussions. This work is supported by MEXT KAKENHI Grant-in-Aid for Transformative Research Areas (A) through the ``Extreme Universe'' collaboration: Grant Number 21H05187. TT is also supported by JSPS Grant-in-Aid for Scientific Research (B) No.~25K01000. 
RM is supported by Grant-in-Aid for JSPS Fellows No.26KJ1548. 
NN is supported by Grant-in-Aid for JSPS Fellows No.26KJ1546.


\appendix

\section{Simple example of PT-invariant BCFT}\label{ap:free}

Consider a $c=1$ free massless scalar CFT in two dimensions. We write the scalar field as $X(\tau,x)$. We introduce the boundaries at $x=0$ and $x=\pi$. We consider the Dirichlet boundary condition, which preserves the conformal invariance, as follows:

\def \TL {\text{L}}
\def \TR {\text{R}}
\begin{align}
  X(\tau,0) &= x_0+ i\lambda \nonumber\\
  X(\tau,\pi) &= x_0 - i\lambda.
\end{align}
When $\lambda$ takes real values, this boundary condition becomes non-hermitian and PT invariant because the changes of the values of $X$ at the boundaries $x=0,\pi$ are described by the exactly marginal boundary perturbation of the form (\ref{BCFTaction}). 

Now let us examine the spectrum of this BCFT. The standard mode expansion reads
\begin{align}
  X(\tau,x) &= x_0 +i\lambda - 2i\lambda \frac{x}{\pi} +2\sum_{n\neq 0}\frac{a_n}{n}e^{-n\tau}\sin nx,
\end{align}
where $a_n$ is quantized by the commutation relation $[a_n,a_m]=n\delta_{n+m,0}$. By choosing the vacuum by $a_n|0\lb=0$ for $n>0$ as usual, the energy for each excitation is found to be (we remove the Casimir energy part)
\begin{align}
  H &= -\frac{\lambda^2}{2\pi^2} +\sum^\infty_{n=1} a_{-n}a_n.
\end{align}
This shows that due to the non-hermitian boundary condition, the energy can be negative, though it is bounded from below because $\sum^\infty_{n=1} a_{-n}a_n\geq 0$. 

It is useful to study this setup of BCFT (or open string picture) by performing the Wick rotation $(\tau',x')\equiv(x,\tau)$ i.e. from closed string theory viewpoint. 
We now have the initial state at $\tau'=0$ given by the boundary state \cite{Cardy:1989ir}, expressed as $|B_1\lb$ defined by the Dirichlet boundary condition $X=x_0+i\lambda$. Then we have a Euclidean time evolution until $\tau'=\pi$. Finally it is projected to the final state given by the other boundary state defined by the Dirichlet boundary condition $X=x_0-i\lambda$. The left and right-moving mode of the scalar field $X=X_{\TL}+X_{\TR}$ is expanded as follows:
\begin{align}
  X_{\TL} &= x_{(L)} - p_{(L)}(i\tau'+x') + i\sum_{n\neq 0} \frac{\ap_n}{n} e^{in(i\tau'+x')} \\
X_{\TR} &= x_{(R)} -p_{(R)}(i\tau'-x')  + i\sum_{n\neq 0} \frac{\tilde{\ap}_n}{n} e^{in(i\tau'-x')},
\end{align}
where the canonical quantization leads to $[x_{(L)},p_{(L)}]=[x_{(R)},p_{(R)}]=i$ and $[\ap_n,\ap_m]=[\ti{\ap}_n,\ti{\ap}_m]=n\delta_{n+m,0}$. When we compactify the scalar on a circle with the radius $R$, the zero modes are quantized as
$p_{(L)}=\frac{k}{R}+\frac{wR}{2}$ and $p_{(R)}=\frac{k}{R}-\frac{wR}{2}$, where $k$ and $w$ are integers.

At $\tau'=0$, the quantum state is given by the boundary state $\ket{B_1}$ for our Dirichlet condition:
\begin{align}
  \ket{B_1} &\propto \sum_{k\in\mathbb{Z}} e^{i(x_0+i\lambda)\frac{k}{R}} e^{\sum^\infty_{n=1} \frac{1}{n} \ap_{-n}\tilde{\ap}_{-n}}|k,w=0\lb. 
\end{align}
Its hermitian conjugation is given by
\begin{align}
  \bra{B_1} &\propto \la k,w=0|\sum_{k\in\mathbb{Z}} e^{i(x_0-i\lambda)\frac{k}{R}} e^{\sum^\infty_{n=1} \frac{1}{n} \ap_{-n}\tilde{\ap}_{-n}}.
\end{align}
The partition function for this free scalar field theory on the interval $0\leq \tau'\leq \pi$ is given by 
$\la B_1|e^{-\pi H}\ket{B_1}$, which is well-defined in spite of the non-hermitian boundary condition. Also it is clear that the density matrix $\rho=e^{-\frac{\pi}{2}H}|B_1\lb\la B_1|e^{-\frac{\pi}{2}H}$ is hermitian. Note that as opposed to our current setup, if we consider the real valued deformation (i.e. $i\lambda$ is real valued), which is hermitian, $|B_1\lb$ is no longer hermitian conjugate to $\la B_1|$ and thus $\rho$ is not hermitian. Such non-hermitian density matrices (or called transition matrices) were discussed in the context of BCFT in \cite{Kanda:2023zse,Kanda:2023jyi,Kanda:2026jyk}, where interesting phase transitions in pseudo entropy \cite{Nakata:2021ubr} were found.

\section{Evaluation of On-Shell Action}\label{sec:action}

\subsection{Thermal AdS}
In the thermal AdS solution, since $x$ direction has the periodicity $2 \pi a$, the width $\Delta x$ is given by $\Delta x = 2 \pi a - 2 x_*$. 
Note that $2 x_* / a$ was in the range $[0, \pi]$.
Thus in the current regime, the width of BCFT must satisfy $\pi \leq \Delta x / a \leq 2 \pi$. 

Now let us evaluate the free energy at a fixed inverse temperature $\beta$, that is given by the on-shell action \eqref{onacn}.
The bulk geometry satisfies 
\begin{equation}
  R = -6, \qquad K|_{\Sigma} = 2 + O(\epsilon^4). 
\end{equation}
Here we set the cutoff $z \geq \epsilon$, to remove the divergence of on-shell action. 
Moreover, we get $K = 0$ on the EOW brane. In addition, to remove the divergence at asymptotic boundary, we have to add local counter term:
\begin{equation}
  \begin{aligned}
      I_{\mathrm{c.t.}} &= - \frac{1}{16 \pi G_N} \int \sqrt{g^{(0)}} (R^{(0)} + 2) - \frac{1}{8 \pi G_N} \int_{\Sigma} \sqrt{h^{(0)}} K^{(0)} \\
      &= \frac{\beta \Delta \bar{x}}{4 \pi G_N} \int_{\epsilon}^{\infty} \frac{\dd{z}}{z^3} - \frac{\beta \Delta \bar{x}}{8 \pi G_N} \frac{2}{\epsilon^2}, 
  \end{aligned}
\end{equation}
where $\Delta \bar{x} = \sqrt{h(\epsilon)} \Delta x \simeq \left(1 - \frac{\epsilon^2}{2 a^2} \right) \Delta x$. 
The boundary term on $\Sigma$ cancels as well, and total action reads
\begin{equation}
  \begin{aligned}
      I_T &= \frac{\beta}{4 \pi G_N} \int_{M} \frac{\dd{z} \dd{x}}{z^3} 
      - \frac{\beta \Delta \bar{x}}{4 \pi G_N} \int_{\epsilon}^{\infty} \frac{\dd{z}}{z^3} \\
        & \hspace{30pt} + \frac{\beta}{8 \pi G_N} \int_Q \dd{x} \sqrt{\frac{h}{h^2 + \dot{z}^2}} \dot{\phi}^2 \\
      &= \frac{\beta}{4 \pi G_N} \left[ \Delta x \left( -\frac{1}{2 a^2} + \frac{1}{2 \epsilon^2} \right) + 2 \int_0^{x*} \left(- \frac{1}{2 a^2} + \frac{1}{2 z(x)^2}\right) \right] \\
        & \hspace{30pt} - \frac{\beta \Delta x}{4 \pi G_N} \left( 1 - \frac{\epsilon^2}{2 a^2} \right) \frac{1}{2 \epsilon^2} - \frac{2 \beta}{8 \pi G_N} \int_0^{z_*} \dd{z} \dv{x}{z} \frac{h}{z^2} \\
      &= - \frac{\beta}{8 \pi G_N a^2} \left( \Delta x + 4 x_* - \frac{\Delta x}{2} \right) \\
        & \hspace{10pt}+ \frac{\beta}{4 \pi G_N} \Bigg[ \int_0^{z_*} \dd{z} \frac{3}{h(z)^{3/2} \sqrt{2 U(z_*) - 2 U(z)}} \\
        & \hspace{70pt} - \int_0^{z_*} \dd{z} \frac{3}{h(z)^{1/2} \sqrt{2 U(z_*) - 2 U(z)}} \Bigg] \\
      &= - \frac{\beta \, \Delta x}{16 \pi G_N a^2} \\
      &= - \frac{1}{16 \pi G_N} \cdot \frac{\beta}{\Delta x} \left( 2 \pi -  X_T [\Phi_T^{-1} (\Delta \varphi)] \right)^2
  \end{aligned}
\end{equation}
In the last line, we used
\begin{equation}
  \frac{\Delta x^2}{a^2} = \left( \frac{2 \pi a - 2 x_*}{a} \right)^2 = \left( 2 \pi - X_T (\Phi_T^{-1} (\Delta \varphi)) \right)
\end{equation}

\subsection{BTZ with connected EOW brane}

In the BTZ phase with a connected EOW brane, the evaluation of on-shell action slightly changes. 
Now the bulk region generally includes the horizon $z = a$, so $a$ is no longer an independent parameter but a function of temperature: $a = \beta / 2 \pi$. 
Then it naively seems that we cannot choose $\Delta x / \beta$ and $\Delta \varphi$ independently. 
However, we have one new parameter: the periodicity of $x$. 
We write this as $x \sim x + 2 \pi A$, so that $\Delta x = 2 \pi A - 2 x_*$. 
That is to say, $\Delta x$ and $x_*$ do not rely to each other, so we can freely choose $\Delta x$ (while $x_*$ is still a function of $\Delta \varphi$). 

The counter term is
\begin{equation}
  \begin{aligned}
      I_{\mathrm{c.t.}} &= - \frac{1}{16 \pi G_N} \int \sqrt{g^{(0)}} (R^{(0)} + 2) - \frac{1}{8 \pi G_N} \int_{\Sigma} \sqrt{h^{(0)}} K^{(0)} \\
      &= \frac{\bar{\beta} \Delta x}{4 \pi G_N} \int_{\epsilon}^{\infty} \frac{\dd{z}}{z^3} - \frac{\bar{\beta} \Delta x}{8 \pi G_N} \frac{2}{\epsilon^2}. 
  \end{aligned}
\end{equation}
Here we introduced $\bar{\beta} = \sqrt{h(\epsilon)} \beta \simeq \left( 1 - \frac{\epsilon^2}{2 a^2} \right) \beta$.

The boundary term on $\Sigma$ cancels, and on-shell action becomes
\begin{equation}
  \begin{aligned}
    I_B &= \frac{\beta}{4 \pi G_N} \int_{M} \frac{\dd{z} \dd{x}}{z^3} - \frac{2 \bar{\beta} x_*}{4 \pi G_N} \int_{\epsilon}^{\infty} \frac{\dd{z}}{z^3} \\
       & \hspace{30pt}  + \frac{\beta}{8 \pi G_N} \int_Q \dd{x} \frac{h}{\sqrt{h + \dot{z}^2}} \dot{\phi}^2 \\
    &= \frac{\beta}{4 \pi G_N} \Bigg[ \Delta x \left( -\frac{1}{2 a^2} + \frac{1}{2 \epsilon^2} \right) \\
    & \hspace{60pt} + 2 \int_0^{x*} \left(- \frac{1}{2 a^2} + \frac{1}{2 z(x)^2}\right) \Bigg] \\
      & \hspace{30pt} - \frac{\beta \Delta x}{4 \pi G_N} \left( 1 - \frac{\epsilon^2}{2 a^2} \right) \frac{1}{2 \epsilon^2} \\
      & \hspace{30pt} - \frac{2 \beta}{8 \pi G_N} \int_0^{x_*} \frac{h}{\sqrt{h + \dot{z}^2}} \frac{(2 h - z h') \sqrt{h + \dot{z}^2}}{2 z^2 h} \\
    &= - \frac{\beta}{8 \pi G_N a^2} \left( \Delta x + 2 x_* - \frac{\Delta x}{2} \right) \\
        & \hspace{30pt} + \frac{\beta}{4 \pi G_N} \left[ \int_0^{x_*} \dd{x} \frac{1}{z(x)^2} - \int_0^{x_*} \dd{x} \frac{1}{z(x)^2}\right] \\
    &= - \frac{\beta \, \Delta x}{16 \pi G_N a^2} - \frac{2 \beta x_*}{8 \pi G_N a^2} \\
    &= - \frac{\pi}{4 G_N} \frac{\Delta x}{\beta} - \frac{1}{2 G_N} X_B \left[ \Phi_B^{-1} (\Delta \varphi) \right]
  \end{aligned}
\end{equation}
In the third equality, we used that $h = 1 - z^2 / a^2$ satisfies $2 h - z h' = 2$, and in the final line we used $a = \beta / 2 \pi$. 

\subsection{BTZ black hole with disconnected EOW brane phase}

Then, let us consider the another solution, which hold two disconnected EOW branes. 
This is obtained by setting $\phi = \mathrm{const.}$ on each brane, whose difference is identified to $\Delta \phi$.
The EOW branes extend along $x = \mathrm{const.}$, so the solution is the same as the standard AdS/BCFT without any scalar field with vanishing tension \cite{Takayanagi:2011zk,Fujita:2011fp}. 
The on-shell action becomes
\begin{equation}
  \begin{aligned}
      I_D &= \frac{\beta}{4 \pi G_N} \int_{-x*}^{x_*} \dd{x} \int_{\epsilon}^{a} \frac{\dd{z}}{z^3} - \frac{\bar{\beta} \Delta x}{4 \pi G_N} \int_{\epsilon}^{\infty} \frac{\dd{z}}{z^3} \\
      &= - \frac{\beta \, \Delta x}{16 \pi G_N a^2} \\
      &= - \frac{\pi \, \Delta x}{4 G_N \beta}. 
  \end{aligned}
\end{equation}
In the last line, we used the relation between black hole temperature and horizon radius: $a = r_h^{-1} = \beta / 2 \pi$.

\subsection{Summary}

We summarize the free energy of three phases. 
\begin{equation}
  \begin{aligned}
      & \mathrm{Thermal ~ AdS} \\
      & \hspace{30pt} I_T = - \frac{1}{16 \pi G_N} \cdot \frac{\beta}{\Delta x} \left( 2 \pi -  X_T [\Phi_T^{-1} (\Delta \varphi)] \right)^2 \\
      & \mathrm{BTZ ~ BH ~ (connected)} \\
      & \hspace{30pt} I_B = - \frac{\pi}{4 G_N} \frac{\Delta x}{\beta} - \frac{1}{2 G_N} X_B \left[ \Phi_B^{-1} (\Delta \varphi) \right] \\
      & \mathrm{BTZ ~ BH ~ (disconnected)} \\
      & \hspace{30pt} I_D = - \frac{\pi}{4 G_N} \frac{\Delta x}{\beta}
  \end{aligned} \label{actions}
\end{equation}
For a given $\Delta x / \beta$ and $\phi$, the phase with lowest free energy is realized. 
We can see that the first term of $I_B$ is completely same to $I_D$. 
Therefore, since the second term of $I_B$ is negative, $I_D$ is always larger than $I_B$. 
This means that the BTZ black hole with disconnected EOW brane phase never becomes dominant in this time, unlike the usual real scalar case \cite{Kanda:2023zse}. 
By comparing $I_T$ and $I_B$, the phase transition occurs when
\begin{equation}
  \frac{\Delta x}{\beta} = \left( \frac{\Delta x}{\beta} \right)_{\mathrm{c}}
\end{equation}
with
\begin{equation}
  \begin{aligned}
    \frac{\pi}{4 G_N} \left( \frac{\Delta x}{\beta} \right)^2_{\mathrm{c}} + \frac{1}{2 G_N} X_B[\Phi^{-1}_B (\Delta \varphi)] \left( \frac{\Delta x}{\beta} \right)_{\mathrm{c}} \\ - \frac{1}{16 \pi G_N} \left( 2 \pi -  X_T [\Phi_T^{-1} (\Delta \varphi)] \right)^2 = 0 \\
  \end{aligned}
\end{equation}
\begin{equation}
    \begin{aligned}
         \left( \frac{\Delta x}{\beta} \right)_{\mathrm{c}} &= \frac{1}{2 \pi} \Bigg[2 X_B[\Phi^{-1}_B(\Delta \varphi)] \\ 
          & + \sqrt{4 X_B[\Phi^{-1}_B(\Delta \varphi)]^2 + \left( 2 \pi - X_T [\Phi_T^{-1} (\Delta \varphi)] \right)^2} \Bigg]
    \end{aligned} \label{critical}
\end{equation}
At low temperature $\frac{\Delta x}{\beta} \leq \left( \frac{\Delta x}{\beta} \right)_{\mathrm{c}}$, thermal AdS phase is realized, while at high temperature $\frac{\Delta x}{\beta} \geq \left( \frac{\Delta x}{\beta} \right)_{\mathrm{c}}$, 
BTZ black hole with connected EOW brane phase appear. 

\section{Analytical Expressions} \label{ap:analytical}
We have the analytical expressions of $\Phi_i(s)$ and $X_i(s)$:
\begin{align}
  \Phi_B(s) &= \frac{2(1-s^2)^{\frac{1}{4}}}{\sqrt{2-s^2}}K \qty(\frac{1}{2-s^2}), \\
  \Phi_T(s) &= 2(1-s^2)^{\frac{1}{4}} K\qty(-(1-s^2)),\\
 X_B(s) &\\
 = &\frac{2s}{\sqrt{1-s^2}\sqrt{2-s^2}}\!\Biggl[\!\Pi\!\left(\!\frac{1-s^2}{2-s^2}\!,\!\frac{1}{2-s^2}\!\right)\!-\! K\!\left(\frac{1}{2-s^2}\!\right)\Biggr],\\
  X_T(s)&= \frac{2\sqrt{1-s^2}}{s}\left[\Pi\!\left(s^2,s^2-1\right)-K\!\left(s^2-1\right)\!\right], \\
\end{align}

where $s=\frac{z_*}{a}$, and $s\in[0,1]$. If we put
\begin{align}
  p=\frac{1}{2-s^2},\ \   r=\frac{1-s^2}{2-s^2},
\end{align}
we can express them using the identical function:
\begin{align}
&  \Phi_B(s) = \Phi_G(p),\ \    X_B(s) = X_G(p),\ \\
&  \Phi_T(s) = \Phi_G(r), \ \ \   X_T(s) = X_G(r),\ \ 
\end{align}
where 
\ba
&&  \Phi_G(x) = 2(x(1-x))^{1/4} K(x),\no
&&  X_G(x) = 2\sqrt{\frac{x\,|2x-1|}{1-x}}\left[\Pi(1-x,x)-K(x)\right].
\ea
Thus, the maximal values of $\Phi_B(s)$ and $\Phi_T^I(s)$ are the same (Fig.\ref{fig:gf}).
\begin{figure}[h]
  \centering
  \includegraphics[width=0.5\columnwidth]{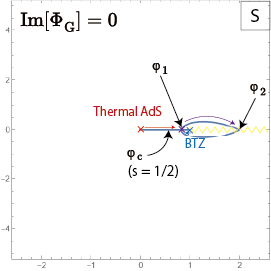}
  \caption{The contour plot of $\Im\Phi_G=0$. 
Starting from $s=1$, the BTZ branch follows the blue path and is analytically continued to the purple branch, while the Thermal AdS branch start from $s=0$, follows the red path and is continued to the same purple branch. 
Thus $\Phi_T$ and $\Phi_B$ have been analytically continued twice at the same value $\vp_1,\ \vp_2$.}
  \label{fig:gf}
\end{figure}

\section{Details of Solutions in each phase}\label{ap: solution}
Here we present some details of the EOW brane solutions in each phase by showing explicit plots.  

\subsection{Thermal AdS for $0\leq \Delta\vp\leq \vp_1$}
For $0\leq \Delta \vp\leq \vp_c$, $\Delta\vp$ and $x_*$ take real values where $s$ takes values in the range $0\leq s=\frac{z_*}{a}\leq 1$. These are the functions explicitly given by (\ref{pts}) and are plotted in Fig.\ref{fig:zxplotTADS}.
We have $s=1$ and $\frac{2x_*}{a}=\pi$ at $\Delta\vp=0$, while we find $s=0$ and $\frac{2x_*}{a}=0$ at $\Delta\vp=\vp_c$. 

\begin{figure}
    \centering
    \includegraphics[width=4cm]{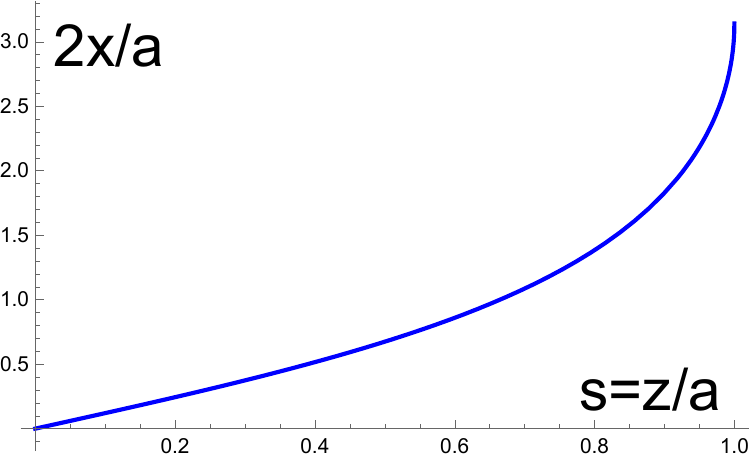}
        \includegraphics[width=4cm]{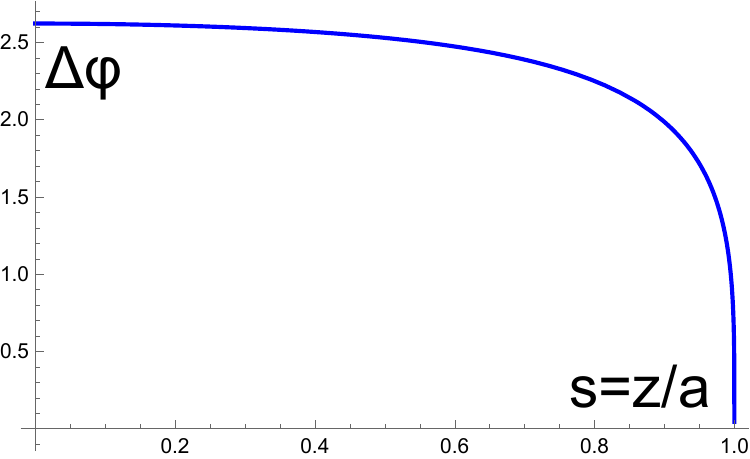}
    \caption{Plots of Plots of $\frac{2x_*}{a}=X_T(s)$ (left) and $\Delta \vp=\Phi_T(s)$ as a function of $s$
for $0\leq s\leq 1$.}
    \label{fig:zxplotTADS}
\end{figure}

For $\vp_c\leq \Delta\vp\leq \vp_1$, $x_*$ and $z_*$ also take imaginary values. We plotted Im$\frac{2x_*}{a}$ and 
$\Delta \vp$ as a function of $s=\frac{z_*}{a}$ for imaginary values of $s$ in Fig.\ref{fig:enchiplot}.
Both get maximized at $s\simeq 1.937i$, where we have 
$\Delta\vp=\vp_1\simeq 2.858$ and $X_T=1.002i$. Thus there are two branches of the solutions. We choose the one which connected to $s=0$ in the present paper as this is directly connected to our real valued solution for $\Delta\vp\leq \vp_c$.

\begin{figure}
    \centering
    \includegraphics[width=4cm]{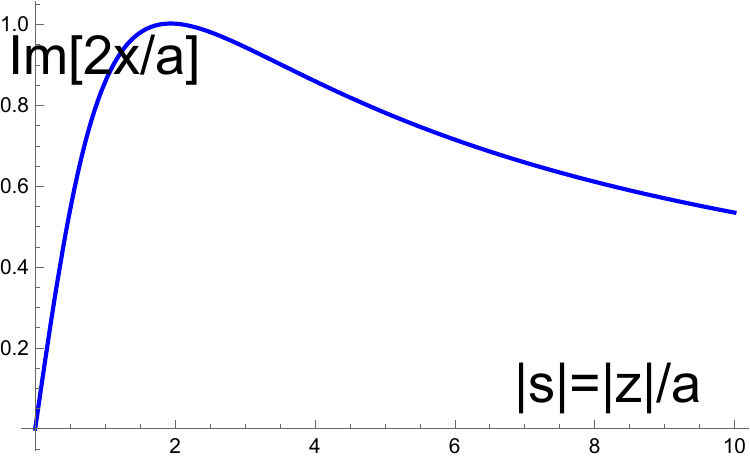}
        \includegraphics[width=4cm]{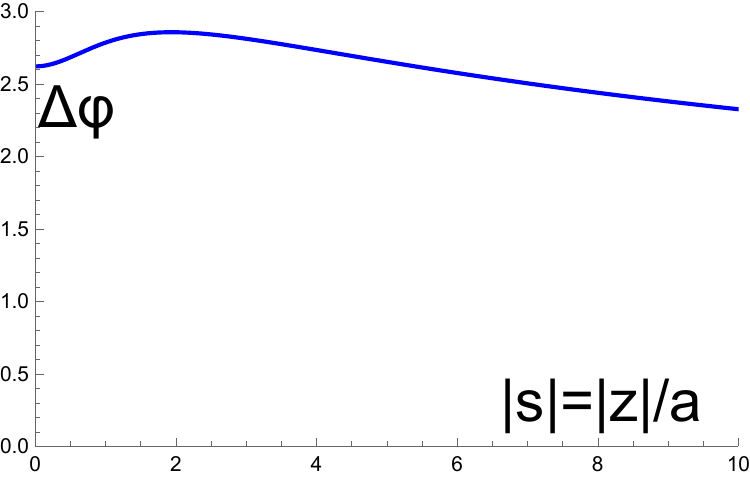}
    \caption{Plots of Im$\frac{2x_*}{a}=$Im$X_T(s)$ (left) and $\Delta \vp=\Phi_T(s)$ as a function of $|s|$ for imaginary $s$.}
    \label{fig:enchiplot}
\end{figure}

\subsection{BTZ for $0\leq \Delta\vp\leq \vp_1$}

In the BTZ phase, we find the real valued functions 
$\frac{2x_*}{a}=X_B(s)$ and $\Delta \vp=\Phi_B(s)$ for $0\leq s\leq 1$ given by (\ref{solbtz}). They are
 plotted in Fig.\ref{fig:enchiplotb}. We find $\Delta\vp=0$ and $x_*=0$ at $s=1$, while we have 
 $\Delta\vp=\vp_c$ and $x_*=0$ at $s=0$. Also $\Delta\vp$ takes its maximum $\Delta\vp=\vp_1\simeq 2.858$ at $s\simeq 0.889$.
 
\begin{figure}
    \centering
    \includegraphics[width=4cm]{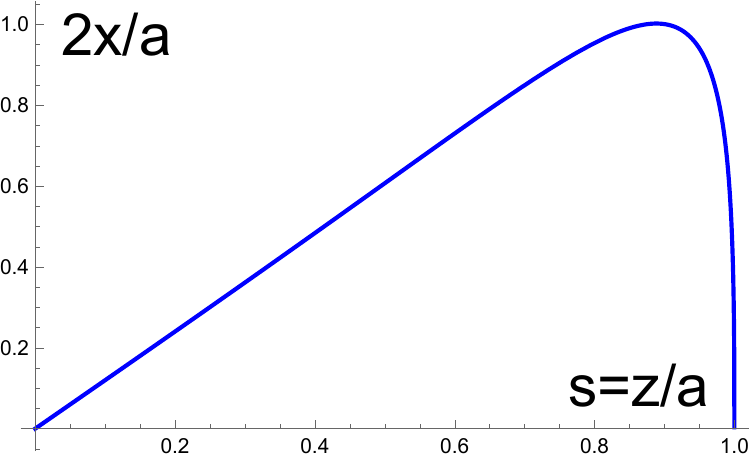}
        \includegraphics[width=4cm]{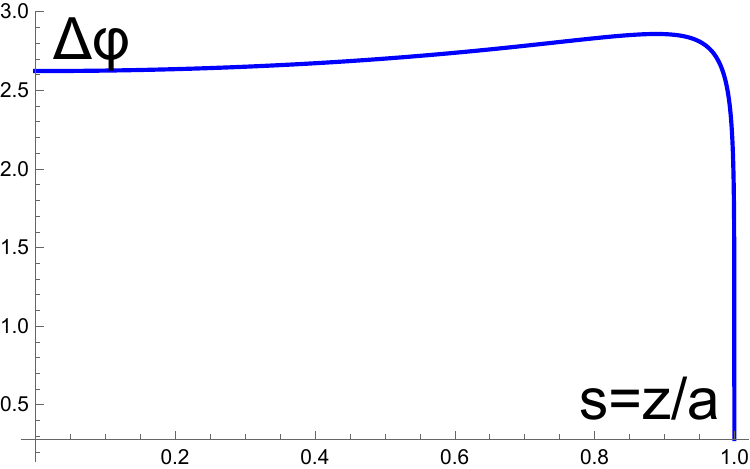}
    \caption{Plots of $\frac{2x_*}{a}=X_B(s)$ (left) and $\Delta \vp=\Phi_B(s)$ (right) as a function of $s$ for $0\leq s\leq 1$.}
    \label{fig:enchiplotb}
\end{figure}

\subsection{Thermal AdS for $\Delta\vp> \vp_1$}

\begin{figure}[t]
    \centering
    \includegraphics[scale = 0.35]{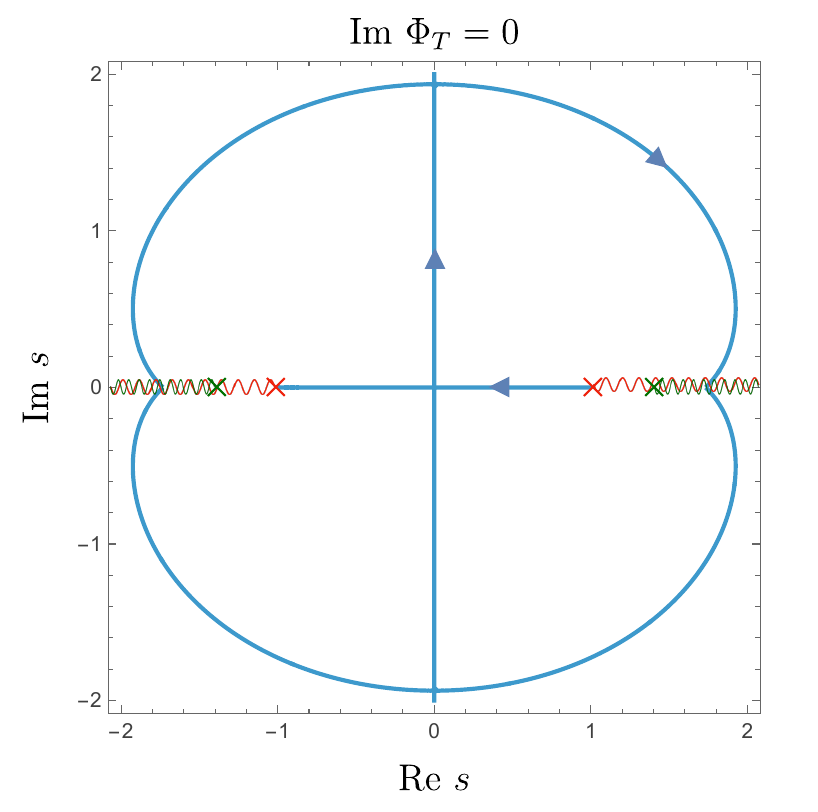}
    \caption{The contour plot of $\Im \Phi_T = 0$. 
      The contour extends along the real axis from $-1$ to $1$, the full imaginary axis, and also the non-trivial curve in each quadrant. 
      To monotonically increase the value of $\Phi_T$, we start from $s = 1$, pass through $s = 0$ and $s = 1.994 i$, and follow the route in first quadrant. 
      As we saw, $\phi_T (s = 1) = 0$, $\Phi_T (s = 0) = \varphi_c$, and $\Phi_T (s \simeq 1.937 i) = \varphi_1$. 
      We also note that the function has fourth-root branch points at $s = \pm 1$, and logarithmic ones at $s = \pm \sqrt{2}$. 
      Each branch cut runs along the real axis to infinity. 
      }
      \label{phitcomp1}
\end{figure}

\begin{figure}[t]
    \centering
    \includegraphics[scale = 0.35]{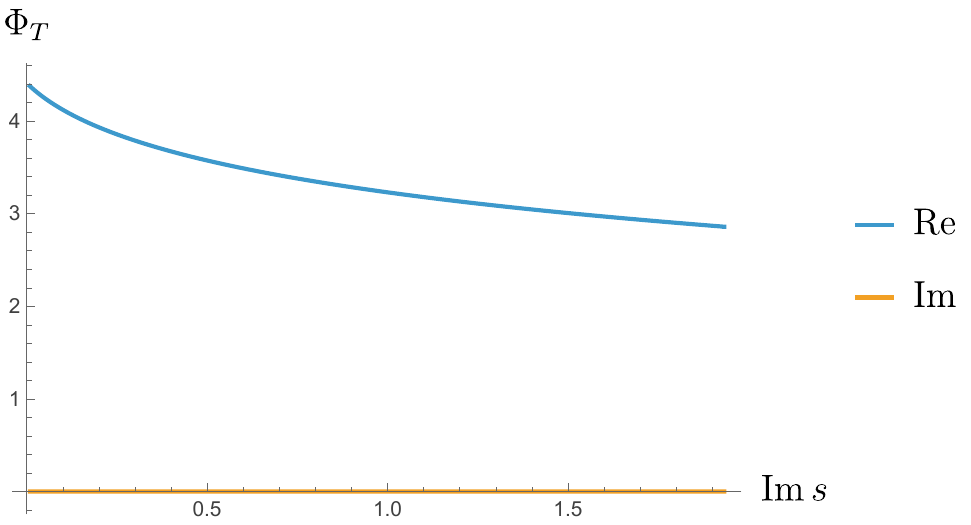}
    \caption{Evaluation of $\Phi_T$ along the curve in first quadrant. 
      The horizontal axis is the imaginary part of $s$, which monotonically decrease along the curve. 
      The blue one shows $\Re \Phi_T$ and orange one is $\Im \Phi_T$. 
      We can explicitly check that indeed $\Phi_T$ take real values which exceed $\varphi_1$.
      }
      \label{phitcomp2}
\end{figure}

We can further extend the phase diagram for $\Delta \varphi$ larger than $\varphi_1$. 
To see this, we first scrutinize $\Phi_{T} (s)$ as a complex function of $s$. 
Since we are assuming $\Delta \varphi$ to be real, we focus on the contour in the $s$-plane on which $\Im \Phi_T = 0$ (Fig.\ref{phitcomp1}). 
It shows that the contour extends along the real axis from $-1$ to $1$, the full imaginary axis, and also the non-trivial curve in each quadrant. 
As we saw in the previous subsections, $\Phi_T$ starts from $0$ at $s = 1$ and monotonically increase to $\Phi_T = \varphi_c \simeq 2.622$ at $s = 0$. 
It continues to increase along $s\in i \mathbb{R}_{\geq 0}$ and reach to the local maximum $\Phi_T = \varphi_1 \simeq 2.858$ at $s \simeq 1.937 i$. 
To extend $\Phi_T$ larger than $\varphi_1$, we need to follow one of the non-trivial curves. 
Here we pick the one in the first quadrant without loss of generality, and show the value of $\Phi_T$ along the curve in first quadrant in Fig.\ref{phitcomp2}. 
It shows that we can indeed extend the value of boundary scalar source until $\Delta \varphi \simeq 4.410$. 
Since the $\Phi_T$ becomes a monotonic function along the curve, we can make the inverse function of it and evaluate the $X_T[\Phi_T^{-1}]$ and on-shell action (Fig.\ref{fulx} and \ref{fula}).

\subsection{BTZ for $\Delta\vp> \vp_1$}

\begin{figure}[t]
    \centering
    \includegraphics[scale = 0.35]{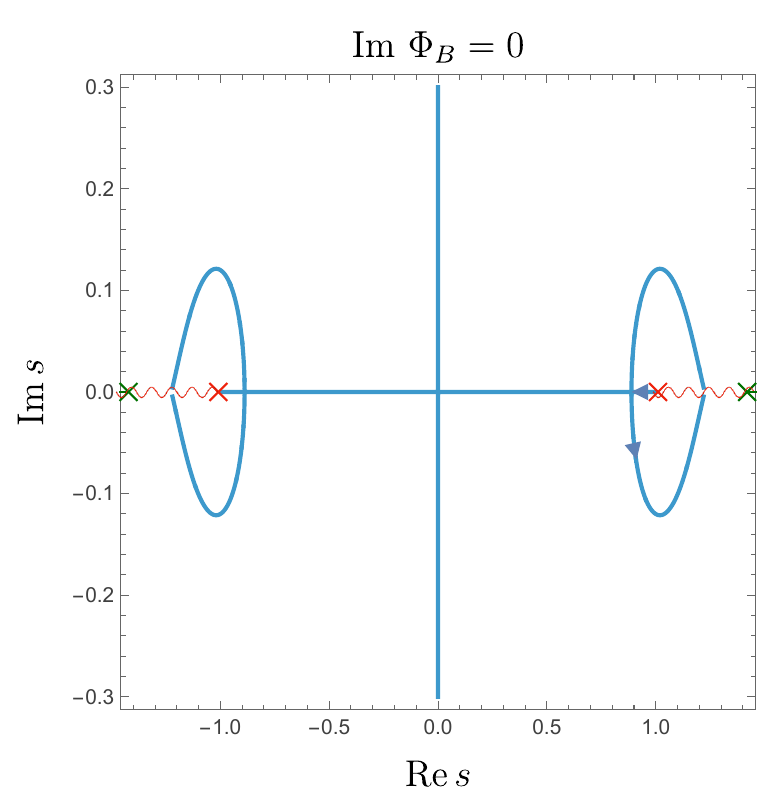}
    \caption{The contour plot of $\Im \Phi_B = 0$. 
      The contour extends along the real axis from $-1$ to $1$, and the non-trivial curve in each quadrant. 
      To monotonically increase the value of $\Phi_B$, we start from $s = 1$, pass through $s = 0$, and follow the route in fourth quadrant. 
      As we saw, $\phi_B (s = 1) = 0$ and $\Phi_B (s \simeq 0.889) = \varphi_1$.
      We note that the function has fourth-root branch points at $s = \pm 1$, logarithmic ones at $s = \pm 1$ and $\pm \sqrt{2}$, and square root branch at $s = \sqrt{2}$. 
      Each fractional branch cut extends along the real axis to infinity, while the logarithmic cut runs between $[- \sqrt{2}, -2]$ and $[1, \sqrt{2}]$. 
      }
      \label{phibcomp1}
\end{figure}

\begin{figure}[t]
    \centering
    \includegraphics[scale = 0.35]{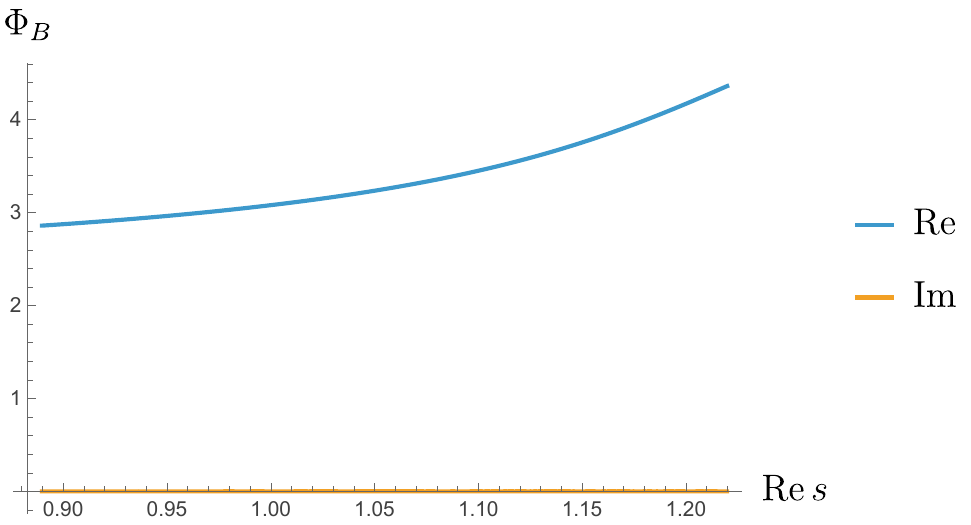}
    \caption{Evaluation of $\Phi_B$ along the curve in fourth quadrant. 
    The horizontal axis is the real part of $s$, which monotonically increase along the curve. 
    The blue one shows $\Re \Phi_B$ and orange one is $\Im \Phi_B$. 
    We can explicitly check that indeed $\Phi_B$ take real values which exceed $\varphi_1$.}
    \label{phibcomp2}
\end{figure}

We take the same procedure to $\Phi_B$. 
The contour of $\Im \ \Phi_B = 0$ is shown in Fig.\ref{phibcomp1}. 
The new branch to exceed $\varphi_1$ emerges at $s \simeq 0.889$. 
Along the non-trivial curve in the fourth quadrant, $\Phi_B$ behaves like in Fig.\ref{phibcomp2}. 
It shows that we can indeed extend the value of boundary scalar source until $\Delta \varphi \simeq 4.410$. 
$\Phi_B$ is also a monotonic function along the curve, we can make the inverse function of it and evaluate the $X_B[\Phi_B^{-1}]$ and on-shell action. 
They are shown in Fig.\ref{fulx} and \ref{fula} respectively. 
One can check that the real part of $X_B$ is always positive at this range, so that the real part of on-shell action is always smaller than that of disconnected BTZ phase. 
We also remark that the first term of connected BTZ on-shell action \eqref{actions} corresponds to thermal free energy of the black hole metric $\beta F = \beta E - S$, while the second one comes from the boundary effects.
The imaginary value of on-shell action fully comes from the latter, as we expect from the initial setup.

Combining the results of this section, we can draw phase diagram of this system with wider range of $\Delta \varphi$.
The results are shown in Fig.\ref{fulaa}.

\begin{figure}[t]
  \centering
  \includegraphics[scale = 0.3]{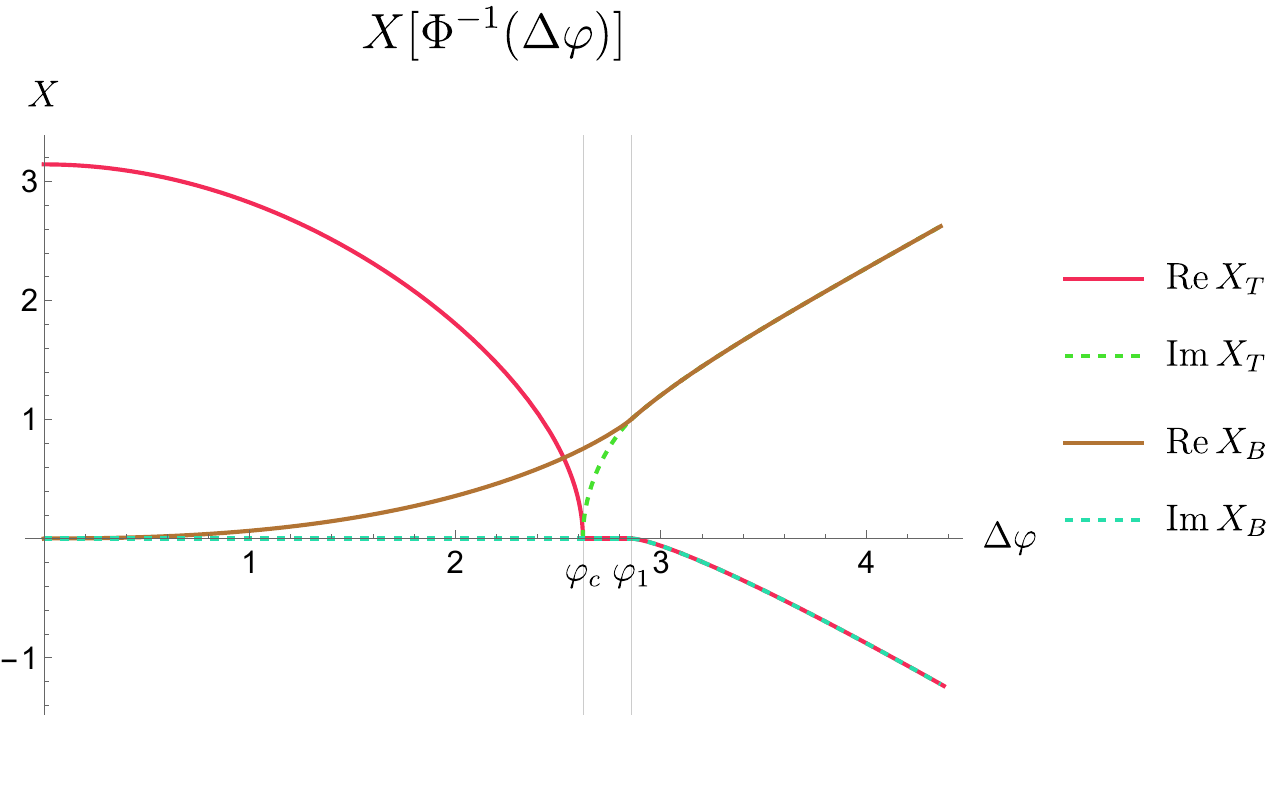}
  \caption{$X [ \Phi^{-1} (\Delta \varphi)]$ for each solution. }
  \label{fulx}
\end{figure}

\begin{figure}[t]
  \centering
  \includegraphics[scale = 0.3]{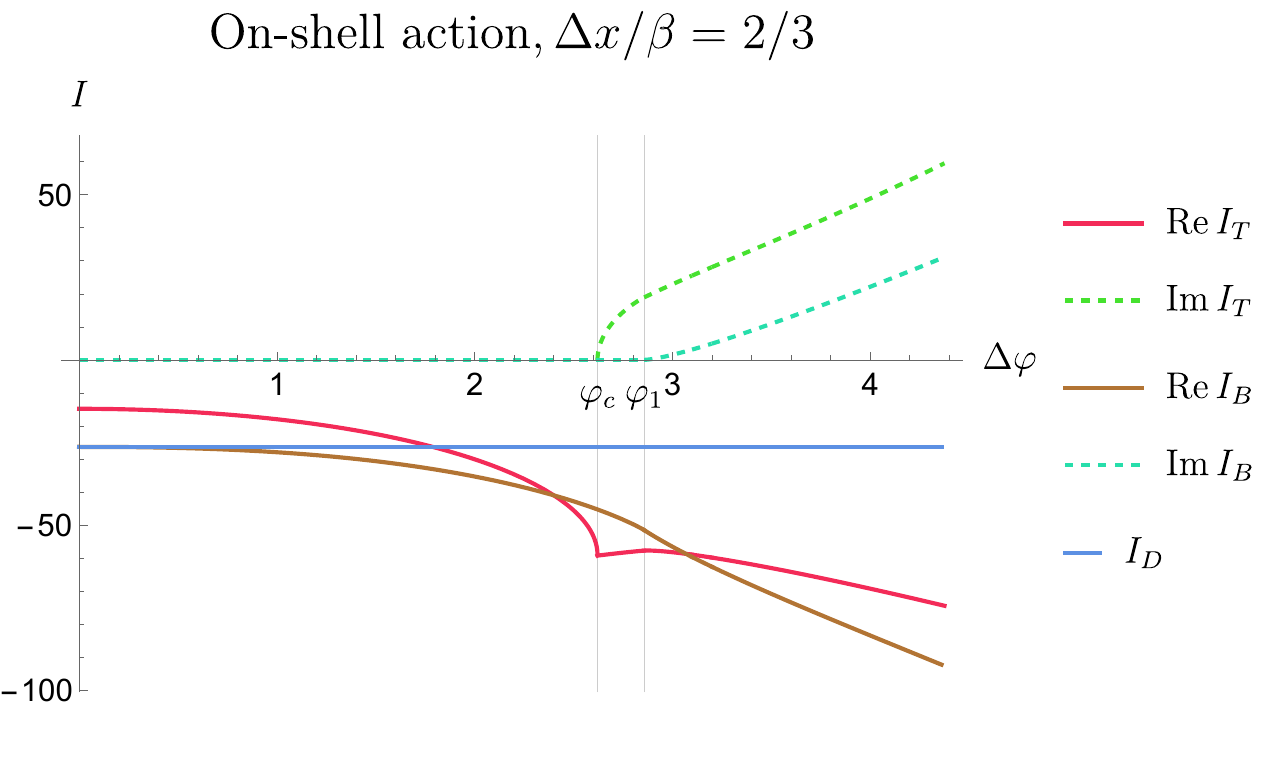}
  \caption{On-shell actions for each solutions. We set $16 \pi G_N = 1$, and $\Delta x / \beta = 2 / 3$. 
  We can easily check that the disconnected BTZ phase is always sub-dominant. }
  \label{fula}
\end{figure}

\begin{figure}[t]
  \centering
  \includegraphics[scale = 0.3]{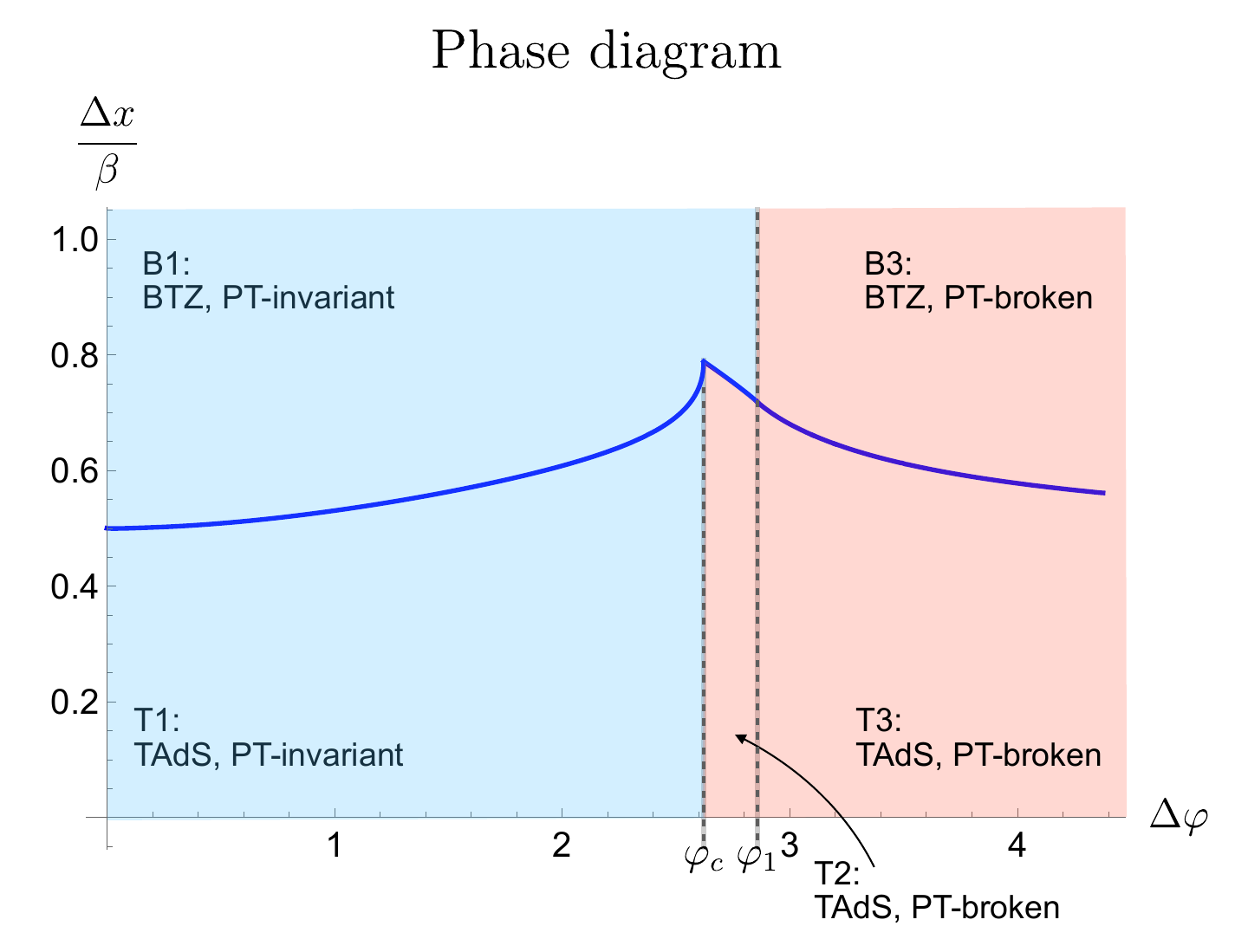}
  \caption{A phase diagram extended to $\Delta \varphi \lesssim 4.410$. }
  \label{fulaa}
\end{figure}

\subsection{Further extension for $\Delta \varphi$}
In the previous subsection we extended the value of $\Delta \varphi$ up to $\sim 4.410$. 
One may worry that a new phase transition might occur at that point, but it is not the case. 
To see that, we will further extend the phase diagram, which demand us to go beyond the branch cuts and consider the second Riemann surface. 
Following monodromy conditions of elliptic integrals are useful: 
\begin{align}
  K(1 + e^{\pm 2 \pi i} (m - 1)) &= K(m) \mp 2 i K(1-m), \label{kmono} \\
  \Pi(1 + e^{\pm 2 \pi i} (n-1), m) &= \Pi (n, m) \mp \frac{i \pi}{\sqrt{n-1} \sqrt{1 - m / n}}, \label{p1mono} \\
  \Pi(n, 1 + e^{\pm 2 \pi i} (m-1)) & \notag \\
  = \Pi(n, m) \mp 2 i& \left[ K(1-m) + \frac{n}{1-n} \Pi \left( \frac{1-m}{1-n}, 1-m \right) \right]. \label{p2mono}
\end{align}
For example, we can evaluate $\Phi_T(s)$ beyond the branch cut at $s \geq \sqrt{2}$ as
\begin{equation}
  \begin{aligned}
    \Phi_T^{\mathrm{2 nd.}} (s) &= \left[ 2(1-s^2)^{\frac{1}{4}} K \left( -(1-s^2) \right) \right]^{\mathrm{2nd.}} \\
    &= \mp 2 i (1-s^2)^{\frac{1}{4}} \left[ K \left( -(1-s^2) \right) \pm 2 i K(2 - s^2) \right]
  \end{aligned} \label{ftbey}
\end{equation}
Here the superscript $^{\mathrm{2nd.}}$ means that the function is evaluated on the second Riemann surface. 
One should take upper sign when one across the branch cut from first quadrant to fourth, and lower sign for the other direction. 
In the same way
\footnote{
    Note that $X_T$ has square root branches at $s = 1$, which comes from the pre-factor $\sqrt{1 - s^2}$ and Eq.\eqref{p1mono}, and logarithmic branch at $s = \sqrt{2}$ from Eq.\eqref{p2mono}.
    In the same way, $X_B$ has square root branches at $s = 1, \sqrt{2}$, coming from the pre-factor, a square root branch at $s = \sqrt{2}$ from Eq.\eqref{p1mono}, and logarithmic branch at $s = 1$ from Eq.\eqref{p2mono}. 
    branch cuts runs from each branch point to infinity along the positive real axis. 
    The same structure appears on the negative real axis, because $s$ dependence always appear with squared. 
}
\begin{align}
  &\begin{aligned}
    X_T^{\mathrm{2nd.}} (s) = - \frac{2 \sqrt{1 - s^2}}{s} \Bigg[ \Pi \left( s^2, s^2 - 1 \right) \hspace{50pt} \\ \pm 2 i \left\{ K(2-s^2) + \frac{s^2}{1 - s^2} \Pi \left( \frac{2-s^2}{1-s^2}, 2-s^2 \right) \right \} \\
    \pm \frac{i \pi}{\sqrt{1 - 1/s^2}} - K(s^2-1) \mp 2i K(2-s^2)
    \Bigg] &, \\
    (s \geq \sqrt{2}) &,
  \end{aligned} \\
  &\begin{aligned}
      \Phi_B^{\mathrm{2nd.}} (s) = \mp \frac{2 i (1-s^2)^{\frac{1}{4}}}{\sqrt{2-s^2}} \left[  K \left( \frac{1}{2 - s^2} \right) \pm 2 i K \left( \frac{1 - s^2}{2 - s^2}  \right) \right], \\ (1 \leq s \leq \sqrt{2}), 
  \end{aligned}\\
  &\begin{aligned}
    X_B^{\mathrm{2nd.}} (s) = - \frac{2 s}{ \sqrt{1 - s^2} \sqrt{2 - s^2}} \Bigg[ \Pi \left(\frac{1-s^2}{2-s^2}, \frac{1}{2-s^2} \right) \hspace{10pt} \\
    \pm 2 i \left\{ K \left( \frac{1-s^2}{2-s^2} \right) + (1 - s^2) \Pi \left( 1-s^2, \frac{1-s^2}{2-s^2} \right) \right \} \\
    - K \left( \frac{1}{2-s^2} \right) \mp 2i K \left( \frac{1-s^2}{2-s^2} \right)
    \Bigg] &, \\
    (1 \leq s \leq \sqrt{2}) &. 
  \end{aligned}
\end{align}
Using these formula, we can easily extend the value of $\Delta \varphi$. 
The extended phase diagram is shown in Fig.\ref{fig:phased}. 
We can check the phase transition line extends beyond $\Delta \varphi > \varphi_1$ smoothly, that imply the phase transition does not happen there. 
Indeed, we know that the phase transition can occur when two $ \Im \Phi = 0$ contours intersect to each other, which is not the case when we cross the branch cuts. 
This is why we conclude a new phase transition does not occur at that point. 
By repeating similar procedures, we can further and further extend the value of $\Delta \varphi$.

\section{Non Hermitian quantum mechanics }

Here we briefly review some properties of non Hermitian systems. 
More comprehensive discussions are given in \textit{e.g.} \cite{Ashida:2020dkc}. 
We focus on the case that the Hamiltonian $H$ is diagonalizable. 

\subsection{Linear Algebra statement}

\begin{dfn}[pseudo-Hermitian]
  A linear operator $H$ is said to be pseudo Hermitian if there exists a Hermitian operator $\tau$ such that
  \begin{equation}
    H^\dagger = \tau H \tau^{-1}. \label{psedef}
  \end{equation}
\end{dfn}

\begin{thm}[pseudo-Hermiticity and spectrum] \label{psespe}
  Assume $H$ is diagonalizable and has a discrete spectrum. 
  Then, the following two are equivalent:
  \begin{enumerate}
    \item $H$ is pseudo-Hermitian.
    \item The eigenvalues of $H$ are either real or come in complex conjugate pairs.
  \end{enumerate}
\end{thm}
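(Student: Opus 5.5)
We prove the two implications separately, working in the eigenbasis of $H$. Since $H$ is diagonalizable with discrete spectrum, we write $H=\sum_n E_n |\psi_n\rangle\langle\!\langle \phi_n|$. Here $H|\psi_n\rangle=E_n|\psi_n\rangle$, $H^\dagger|\phi_n\rangle=E_n^*|\phi_n\rangle$, and the biorthonormal system satisfies $\langle \phi_m|\psi_n\rangle=\delta_{mn}$ and $\sum_n|\psi_n\rangle\langle\phi_n|=1$. The left eigenvectors $|\phi_n\rangle$ are obtained from the dual basis of $\{|\psi_n\rangle\}$. In the infinite-dimensional case we assume, as physicists usually do, that this dual basis exists and that the sums converge; we make this assumption explicit.

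\emph{(1)$\Rightarrow$(2).} Suppose $H^\dagger=\tau H\tau^{-1}$. Then $H^\dagger$ and $H$ are similar, so they have the same spectrum counted with multiplicity. The spectrum of $H^\dagger$ is the complex conjugate of the spectrum of $H$, because $H^\dagger|\phi_n\rangle=E_n^*|\phi_n\rangle$ and $\{|\phi_n\rangle\}$ is a basis. Hence the spectrum of $H$ is invariant under complex conjugation with multiplicities, which is exactly statement 2. Concretely, $\tau|\psi_n\rangle$ is an eigenvector of $H^\dagger$ with eigenvalue $E_n$, so $E_n=E_m^*$ for some $m$, and $\tau$ maps the $E$-eigenspace of $H$ bijectively onto the $E$-eigenspace of $H^\dagger$. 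Note that hermiticity of $\tau$ is not needed for this direction.

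\emph{(2)$\Rightarrow$(1).} We build $\tau$ explicitly. Label eigenvalues so that real $E_n$ come with eigenvectors $|\psi_n\rangle$, and complex ones are paired, $E_{\bar n}=E_n^*$, with equal multiplicities. This pairing is possible by hypothesis once multiplicities are taken into account; we interpret ``come in complex conjugate pairs'' as including equal multiplicity. Define
\begin{equation}
\tau=\sum_{n\,:\,E_n\in\mathbb{R}}|\phi_n\rangle\langle\phi_n|+\sum_{n\,:\,E_n\notin\mathbb{R}}|\phi_n\rangle\langle\phi_{\bar n}|.
\end{equation}
Here the second sum runs over both members of each pair, so it contains $|\phi_n\rangle\langle\phi_{\bar n}|+|\phi_{\bar n}\rangle\langle\phi_n|$ and is therefore Hermitian. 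The first sum is manifestly Hermitian.

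Next we check $\tau H=H^\dagger\tau$ on the basis $|\psi_m\rangle$. For real $E_m$, both sides give $E_m|\phi_m\rangle$, using $H^\dagger|\phi_m\rangle=E_m^*|\phi_m\rangle=E_m|\phi_m\rangle$. For a complex pair, $\tau H|\psi_m\rangle=E_m\tau|\psi_m\rangle=E_m|\phi_{\bar m}\rangle$. On the other side, $H^\dagger\tau|\psi_m\rangle=H^\dagger|\phi_{\bar m}\rangle=E_{\bar m}^*|\phi_{\bar m}\rangle=E_m|\phi_{\bar m}\rangle$. The two agree.

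It remains to show $\tau$ is invertible. Its inverse is $\tau^{-1}=\sum_{\mathrm{real}}|\psi_n\rangle\langle\psi_n|+\sum_{\mathrm{cpx}}|\psi_{\bar n}\rangle\langle\psi_n|$, which we verify directly from biorthonormality. Then $H^\dagger=\tau H\tau^{-1}$, which is statement 1.

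The main obstacle is the infinite-dimensional bookkeeping. In that setting one needs to justify that the $|\phi_n\rangle$ form a Riesz-type basis, so that $\tau$ and $\tau^{-1}$ are bounded. One also needs to handle degenerate eigenvalues, where the pairing must be done at the level of eigenspaces by choosing matched bases. We will state the Riesz-basis assumption explicitly as part of ``diagonalizable with discrete spectrum'' and carry out the pairing eigenspace by eigenspace.
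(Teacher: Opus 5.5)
Your proposal is correct and follows the paper's proof essentially step for step. For $(1)\Rightarrow(2)$ you use the same observation that $\tau$ maps eigenvectors of $H$ to eigenvectors of $H^\dagger$ with the same eigenvalue; for $(2)\Rightarrow(1)$ your $\tau$ and $\tau^{-1}$ are, after relabeling, exactly the paper's $\tau=\sum_n|\tilde{n}'\rangle\langle\tilde{n}|$ and $\tau^{-1}=\sum_n|n\rangle\langle n'|$. Your explicit requirement that conjugate pairs have equal multiplicity is a worthwhile refinement the paper leaves implicit: it is what makes $n\mapsto n'$ an involution, so $\tau$ is Hermitian automatically and the paper's phase choice is unnecessary, and without it $(2)\Rightarrow(1)$ fails (e.g.\ for $\mathrm{diag}(i,i,-i)$).
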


\begin{proof}
  ($1. \Rightarrow 2.$) \ We denote the eigenvalue as $E_n$ and right eigenvector as $\ket{n}$. 
  Applying $H^{\dagger}$ to $\tau \ket{n}$, we find
  \begin{equation}
    H^{\dagger} \tau \ket{n} = \tau H \tau^{-1} \tau \ket{n} = E_n \tau \ket{n}. 
  \end{equation}
  Thus, $\tau \ket{n}$ is a right eigenvector of $H^{\dagger}$ with eigenvalue $E_n$.
  Since the set of eigenvalues of $H^{\dagger}$ is the complex conjugate of that of $H$, there exists an eigenvalue $E_m$ of $H$ that satisfies $E_m = E_n^*$.
  Therefore, the eigenvalues of $H$ are either real or come in complex conjugate pairs. 

  ($2. \Rightarrow 1.$) \ We denote the eigenvalues of $H$ as $E_n$, and the corresponding right and left eigenvectors as $\ket{n}$ and $\bra{\tilde{n}}$, respectively. 
  Since $H$ is diagonalizable, we can construct a complete biorthogonal basis
  \begin{equation}
    \braket{\tilde{m}}{n} = \delta_{mn}, \qquad \sum_n \ket{n} \bra{\tilde{n}} = 1. \label{biorth}
  \end{equation}
  Hamiltonian and its Hermitian conjugate are written as
  \begin{equation}
    H = \sum_n E_n \ket{n} \bra{\tilde{n}}, \qquad H^{\dagger} = \sum_n E_n^* \ket{\tilde{n}} \bra{n}.
  \end{equation}
  Here we defined $\bra{n}$ = ($\ket{n}$)$^{\dagger}$, and $\ket{\tilde{n}} = \left( \bra{\tilde{n}} \right)^{\dagger}$. 
  From the assumption, there exists an index $n'$ for each $n$ such that $E_{n'} = E_n^*$. (If $E_n$ is real, we can set $n' = n$.) 
  We define $\tau$ as
  \begin{equation}
    \tau = \sum_n \ket{\tilde{n'}} \bra{\tilde{n}}. 
  \end{equation}
  Choosing the phase of $\ket{n'}$ appropriately, we can make $\tau$ Hermitian. 
  its inverse is given by
  \begin{equation}
    \tau^{-1} = \sum_n \ket{n} \bra{n'}.
  \end{equation}
  Then, we can check that $H^{\dagger} = \tau H \tau^{-1}$ holds as follows:
  \begin{equation}
    \begin{aligned}
        \tau H \tau^{-1} &= \sum_{l, m, n} E_m \ket{\tilde{l}'} \braket{\tilde{l}}{m} \braket{\tilde{m}}{n} \bra{n'} \\ 
        &= \sum_n E_n \ket{\tilde{n}'} \bra{n'} = H^{\dagger}.
    \end{aligned}
  \end{equation}
\end{proof}

In Eq,\eqref{biorth}, we introduced a biorthogonal system. 
By using this, we can express a trace of a linear operator $A$ as
\begin{equation}
  \Tr(A) = \sum_n \mel{\tilde{n}}{A}{n}.
\end{equation}

\begin{dfn}[quasi-Hermitian]
  A linear operator $H$ is said to be quasi-Hermitian if there exists a positive definite operator $\Theta$ such that 
  \begin{equation}
    H^{\dagger} = \Theta H \Theta^{-1}. \label{quasiH}
  \end{equation}
\end{dfn}

\begin{thm}[quasi-Hermiticity and real spectrum]
  Assume $H$ is diagonalizable and having a discrete spectrum. 
  Then, the following two is equivalent:
  \begin{enumerate}
    \item $H$ is quasi-Hermitian.
    \item The eigenvalues of $H$ are all real.
  \end{enumerate}
\end{thm}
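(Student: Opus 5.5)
The plan is to reuse the biorthogonal machinery introduced in the proof of Theorem~\ref{psespe}, namely the right eigenvectors $\ket{n}$, the left eigenvectors $\bra{\tilde n}$, and the relations \eqref{biorth}. The metric will be built as a sum of positive rank-one projectors.

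\emph{$1.\Rightarrow 2.$} Suppose $H^\dagger=\Theta H\Theta^{-1}$ with $\Theta>0$. Since $\Theta$ is positive definite, it has a positive square root $\Theta^{1/2}$ with an inverse. Define
\begin{equation}
h=\Theta^{1/2}H\Theta^{-1/2}.
\end{equation}
Then
\begin{equation}
h^\dagger=\Theta^{-1/2}H^\dagger\Theta^{1/2}=\Theta^{-1/2}\Theta H\Theta^{-1}\Theta^{1/2}=h,
\end{equation}
so $h$ is Hermitian and has real eigenvalues. Because $H$ is similar to $h$, the eigenvalues of $H$ are all real.

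A more elementary argument avoids the square root. Take $H\ket{n}=E_n\ket{n}$ and evaluate
\begin{equation}
\bra{n}\Theta H\ket{n}=E_n\bra{n}\Theta\ket{n}.
\end{equation}
The relation $\Theta H=H^\dagger\Theta$ gives $\bra{n}H^\dagger\Theta\ket{n}=E_n^*\bra{n}\Theta\ket{n}$. Since $\bra{n}\Theta\ket{n}>0$, comparing the two expressions forces $E_n=E_n^*$.

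\emph{$2.\Rightarrow 1.$} Assume every $E_n$ is real. Define
\begin{equation}
\Theta=\sum_n\ket{\tilde n}\bra{\tilde n}.
\end{equation}
This is the choice $n'=n$ in the construction from Theorem~\ref{psespe}, with no phases to adjust. It is manifestly Hermitian and positive semidefinite.

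To get positive definiteness, note that $\bra{\psi}\Theta\ket{\psi}=\sum_n|\braket{\tilde n}{\psi}|^2$. This vanishes only if $\braket{\tilde n}{\psi}=0$ for all $n$, and completeness $\sum_n\ket{n}\braket{\tilde n}{\psi}=\ket{\psi}$ then forces $\psi=0$.

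The inverse is $\Theta^{-1}=\sum_n\ket{n}\bra{n}$, which follows from $\braket{\tilde m}{n}=\delta_{mn}$. Then
\begin{equation}
\Theta H\Theta^{-1}=\sum_{l,m,n}E_m\ket{\tilde l}\braket{\tilde l}{m}\braket{\tilde m}{n}\bra{n}=\sum_nE_n\ket{\tilde n}\bra{n},
\end{equation}
which equals $H^\dagger$ because $E_n^*=E_n$.

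The main obstacle is not algebraic but analytic when the spectrum is infinite. Discreteness and diagonalizability alone do not guarantee that $\sum_n\ket{\tilde n}\bra{\tilde n}$ converges to a bounded operator with bounded inverse. This requires the eigenbasis to be a Riesz basis.

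I would handle this the way the paper treats Theorem~\ref{psespe}: state the proof for finite dimension, or for a Riesz basis of eigenvectors, and read the sums in that sense. Positive definiteness is then immediate, and so is boundedness of $\Theta$ and $\Theta^{-1}$. I would remark that outside this setting "diagonalizable" must be strengthened accordingly.
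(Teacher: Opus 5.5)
Your proof is correct and follows the paper's route: your $\Theta^{1/2}$ is the particular choice $R=\Theta^{1/2}$ in the paper's factorization $\Theta=R^\dagger R$, and your $2.\Rightarrow 1.$ construction $\Theta=\sum_n\ket{\tilde n}\bra{\tilde n}$, $\Theta^{-1}=\sum_n\ket{n}\bra{n}$ is identical to the paper's. Your extra expectation-value argument, your completeness justification of strict positivity, and your Riesz-basis caveat for infinite spectra are sound refinements that the paper does not include.
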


\begin{proof}
  ($1. \Rightarrow 2.$) In general, a positive definite linear operator can be decomposed into the form $\Theta = R^{\dagger} R $ where $R$ is an invertible matrix. 
  From the definition of quasi-Hermiticity, we have
  \begin{equation}
    H^{\dagger} = R^{\dagger} R H R^{-1} (R^{\dagger})^{-1} \ \Leftrightarrow \ (R^{\dagger})^{-1} H^{\dagger} R^{\dagger} = R H R^{-1}.
  \end{equation}
  Therefore, $R H R^{-1}$ is a Hermitian operator, and its eigenvalues are all real. 
  Since $R$ is an invertible matrix, $H$ and $R H R^{-1}$ have the same eigenvalues, so the eigenvalues of $H$ are also real. 
  
  ($2. \Rightarrow 1.$) We denote the eigenvalues of $H$ as $E_n$, and the corresponding right and left eigenvectors as $\ket{n}$ and $\bra{\tilde{n}}$, respectively.
  As in the same way to the proof of Thm.\ref{psespe}, we can construct a complete biorthogonal basis and write $H$ and $H^{\dagger}$ as
  \begin{gather}
    \braket{\tilde{m}}{n} = \delta_{mn}, \qquad \sum_n \ket{n} \bra{\tilde{n}} = 1, \\
    H = \sum_n E_n \ket{n} \bra{\tilde{n}}, \qquad H^{\dagger} = \sum_n E_n \ket{\tilde{n}} \bra{n}.
  \end{gather}
  constructing $\Theta$ and its inverse as
  \begin{equation}
    \Theta = \sum_n \ket{\tilde{n}} \bra{\tilde{n}}, \qquad \Theta^{-1} = \sum_n \ket{n} \bra{n},
  \end{equation}
  we can easily check that $H^{\dagger} = \Theta H \Theta^{-1}$ holds. 
  Moreover, for any nonzero vector $\ket{x}$, we have
  \begin{equation}
    \ev{\Theta}{x} = \sum_n |\braket{\tilde{n}}{x}|^2 > 0, 
  \end{equation}
  so $\Theta$ is positive definite.
\end{proof}

For a quasi-Hermitian Hamiltonian, we can define a new inner product as $\braket{x}{y}_{\Theta} := \mel{x}{\Theta}{y}$. 
Under this new inner product, the norm of any state is positive, and the time evolution by $H$ becomes unitary  \cite{SCHOLTZ199274}.
Note, however, that $\Theta$ which satisfy Eq.\eqref{quasiH} is not unique, and the choice of the new inner product is also not unique.

\subsection{PT-symmetric quantum mechanics}

We now review the PT-symmetric quantum mechanics \cite{Bender:2005tb}.
We call the Hamiltonian $H$ is PT-invariant if there exists a anti-linear operator $PT$ such that 
\begin{equation}
  (PT) H (PT)^{-1} = H. \label{ptsym}
\end{equation}
In most cases, we also assume that parity oeprator $P$ is linear and unitary, time reversal $T$ is anti-linear and anti-unitary, which commute to each other and involutive;
\begin{equation}
  [P, T] = 0, \quad P^2 = 1, \quad T^2 = 1.
\end{equation}

PT-symmetric Hermitian indicates some nice properties. 
One is that the eigenvalues of $H$ are either real or come in complex conjugate pairs, \textit{i.e.} $H$ is pseudo-Hermitian.

\begin{thm}[PT-invariance and pseudo-Hermiticity]
  Assume $H$ is diagonalizable and having a discrete spectrum. 
  If $H$ is PT-invariant, then $H$ is pseudo-Hermitian.
\end{thm}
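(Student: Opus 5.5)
The plan is to reduce the statement to Theorem~\ref{psespe}. By the ($2.\Rightarrow 1.$) direction of that theorem, it suffices to show that PT-invariance (\ref{ptsym}) forces the eigenvalues of $H$ to be either real or to come in complex conjugate pairs. Diagonalizability and discreteness of the spectrum are exactly the hypotheses Theorem~\ref{psespe} requires, so no additional construction of $\tau$ is needed. The explicit $\tau=\sum_n \ket{\tilde{n'}}\bra{\tilde n}$ built there can be reused verbatim.

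The key step is the standard antilinearity argument. Let $H\ket{n}=E_n\ket{n}$ and write $\Theta:=PT$, which is antilinear and invertible. From (\ref{ptsym}) we have $H\Theta=\Theta H$, hence
\begin{equation}
H\,\Theta\ket{n}=\Theta H\ket{n}=\Theta E_n\ket{n}=E_n^*\,\Theta\ket{n}.
\end{equation}
Here antilinearity of $\Theta$ pulls the scalar out as $E_n^*$. Since $\Theta$ is invertible, $\Theta\ket{n}\neq 0$, so $E_n^*$ is an eigenvalue of $H$. Thus the spectrum is closed under complex conjugation: every $E_n$ is either real, or $E_n^*=E_{n'}$ for some $n'\neq n$.

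I would then make sure the pairing is compatible with multiplicities, because the construction of $\tau$ in Theorem~\ref{psespe} needs a bijection $n\mapsto n'$ on a biorthogonal basis. The map $\Theta$ restricted to the eigenspace of $E$ is an antilinear bijection onto the eigenspace of $E^*$; applying $\Theta^{-1}$ gives the reverse inclusion. Hence the two eigenspaces have equal dimension, and an involutive pairing $n\leftrightarrow n'$ of basis vectors can be chosen. This multiplicity bookkeeping is the main (mild) obstacle: the sketch of Theorem~\ref{psespe} implicitly assumes it, and in infinite dimensions one must also assume that $\Theta$ preserves the domain and that the discrete spectrum makes each eigenspace finite-dimensional. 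With that established, condition 2 of Theorem~\ref{psespe} holds, and therefore $H$ is pseudo-Hermitian.
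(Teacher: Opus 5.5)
Your proposal is correct and follows the same route as the paper: the antilinear commutation $H\,PT = PT\,H$ shows that $E_n^*$ is an eigenvalue whenever $E_n$ is, and pseudo-Hermiticity then follows from the ($2.\Rightarrow 1.$) direction of Theorem~\ref{psespe}. You also show that $PT$ maps the $E$-eigenspace bijectively onto the $E^*$-eigenspace, so multiplicities match and an involutive pairing $n\leftrightarrow n'$ exists; the paper leaves this step implicit, and with such a pairing the $\tau$ of Theorem~\ref{psespe} is automatically Hermitian.
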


\begin{proof}
  We prove that the eigenvalues of PT-invariant Hamiltonian is real or come in complex conjugate pair. 
  Make $E_n$ be an eigenvalue and $\ket{n}$ be a left eigenvector. 
  If $E_n$ is real, there is nothing to show, so we assume $E_n$ is complex.
  Applying $H$ to a vector $PT \ket{n}$, we get
  \begin{equation}
    \begin{aligned}
      H (PT \ket{n}) &= PT H (PT)^{-1} PT \ket{n} \\
      &= PT E_n \ket{n} = E_n^* PT \ket{n} \label{ptn}
    \end{aligned}  
  \end{equation}
  Thus, $E_n^*$ is also an eigenvalue of $H$, and the eigenvalues of $H$ are either real or come in complex conjugate pairs.
\end{proof}

Note that the opposite is also true: if $H$ is pseudo-Hermitian, there exist an anti-liner operator $PT$ which satisfies Eq.\eqref{ptsym} \cite{Mostafazadeh:2002id}.

From Eq.\eqref{ptn}, we see that after applying the PT operation to the eigenvector $\ket{n}$ with real eigenvalue $E_n$, the energy eigenvalue remains the same one. 
In particular, if there is no degeneracy, the eigenstate is invariant. 
\begin{equation}
    PT \ket{n} \propto \ket{n}
\end{equation}
In general, we call the system to be PT-invariant when all of its eigenvalues are real, \textit{i.e.}, quasi-Hermitian. 
On the other hand, it is said to be PT-broken when any one of eigenvalue takes complex value. 

Combining the $PT$ operator and the operator $\tau$, we can make a new anti-linear operator $V$ which relates the right eigenvector to the Hermitian conjugate of the left eigenvector for the same eigenvalue. 

\begin{thm}[PT-symmetry and $V$-operator]
  Assume $H$ is diagonalizable and having a discrete spectrum. 
  If $H$ is PT-invariant, then there exists an anti-linear operator $V$ such that
  \begin{equation}
    V H V^{-1} = H^{\dagger}, \qquad V \ket{n} \propto \ket{\tilde{n}}. \label{vhvh}
  \end{equation}
\end{thm}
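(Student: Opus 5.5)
The plan is to build $V$ by composing the antilinear symmetry $PT$ with the pseudo-metric $\tau$ supplied by the preceding theorem on PT-invariance and pseudo-Hermiticity. Since $H$ is PT-invariant, that theorem gives a Hermitian $\tau$ with $H^\dagger=\tau H\tau^{-1}$. I then set
\begin{equation}
  V := \tau\,(PT),
\end{equation}
which is antilinear because $\tau$ is linear and $PT$ is antilinear. Using $(PT)H(PT)^{-1}=H$, the first relation is immediate:
\begin{equation}
  VHV^{-1}=\tau\,(PT)H(PT)^{-1}\,\tau^{-1}=\tau H\tau^{-1}=H^\dagger .
\end{equation}

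For the eigenvector statement I will work in the biorthogonal basis of Eq.~\eqref{biorth}. The right eigenvectors satisfy $H\ket{n}=E_n\ket{n}$, and $\ket{\tilde n}$ is a right eigenvector of $H^\dagger$ with eigenvalue $E_n^*$. Because $V$ is antilinear,
\begin{equation}
  H^\dagger V\ket{n}=VH\ket{n}=V E_n\ket{n}=E_n^* V\ket{n}.
\end{equation}
So $V\ket{n}$ lies in the $E_n^*$-eigenspace of $H^\dagger$, which is spanned by the $\ket{\tilde m}$ with $E_m=E_n$. If the spectrum is nondegenerate, this eigenspace is one-dimensional and hence $V\ket{n}\propto\ket{\tilde n}$. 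Here I use that the eigenvalues of $H^\dagger$ are the conjugates of those of $H$, and that diagonalizability makes $\{\ket{\tilde m}\}$ a complete eigenbasis of $H^\dagger$.

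The main obstacle is the choice of $\tau$ and the degenerate case. The $\tau$ built in Theorem~\ref{psespe} sends $\ket{n}\mapsto\ket{\tilde n'}$ with $E_{n'}=E_n^*$, while $PT$ sends $\ket{n}$ into the $E_n^*$ eigenspace of $H$ by Eq.~\eqref{ptn}. The composition therefore lands in the $E_n$ eigenspace of $H^\dagger$ (as computed above) only if the conjugate-pairing labels $n\mapsto n'$ are compatible. I will avoid this bookkeeping by defining $V$ directly instead. First I choose, using $PT$-invariance, a labeling with $PT\ket{n}=c_n\ket{n'}$. Then I set
\begin{equation}
  V:=\sum_n \ket{\tilde n}\bra{n}\,\Theta,
\end{equation}
where $\Theta$ is the canonical antilinear conjugation in the $\ket{n}$ basis, i.e.\ $\Theta\bigl(\sum_m a_m\ket{m}\bigr)=\sum_m a_m^*\ket{m}$, so that $V\ket{m}=\ket{\tilde m}$ and $V$ extends antilinearly.

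With this explicit choice, $V\ket{n}=\ket{\tilde n}$ holds by construction. The identity $VHV^{-1}=H^\dagger$ then follows by checking it on the basis $\ket{\tilde m}$:
\begin{equation}
  VHV^{-1}\ket{\tilde m}=VH\ket{m}=V E_m\ket{m}=E_m^*\ket{\tilde m}=H^\dagger\ket{\tilde m}.
\end{equation}
PT-invariance then enters only to guarantee consistency with the paper's $\tau$ and $PT$, i.e.\ that this $V$ coincides with $\tau(PT)$ up to normalizations on each eigenspace. I expect reconciling the two constructions in the degenerate case to be the only genuinely delicate step.
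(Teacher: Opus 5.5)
Your first construction is exactly the paper's proof: $V=\tau(PT)$, with $VHV^{-1}=\tau H\tau^{-1}=H^\dagger$. Your argument for $V\ket{n}\propto\ket{\tilde n}$ is cleaner than the paper's. From $H^\dagger V=VH$, $V\ket{n}$ lies in the $E_n^*$-eigenspace of $H^\dagger$, which is one-dimensional for a nondegenerate spectrum. The paper instead goes through $PT\ket{n}\propto\ket{n'}$ and $\tau\ket{n'}=\ket{\tilde n}$, and it tacitly needs the same nondegeneracy.

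The obstacle you then raise is not real. The eigenspace containment holds for any $\tau$ with $\tau H\tau^{-1}=H^\dagger$, whatever pairing $n\mapsto n'$ was used to build it. The degenerate case only needs an adapted basis, and the labeling you propose already supplies it. Since $(PT)^2=1$, you can take a $PT$-fixed basis in each real eigenspace ($n'=n$) and set $\ket{n'}:=PT\ket{n}$ on each complex pair. Then $n''=n$, and the paper's $\tau$ gives $\tau PT\ket{n}=c_n\tau\ket{n'}=c_n\ket{\tilde n}$ with no nondegeneracy assumption. Since the theorem only asserts existence, there is nothing to reconcile.

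Your second construction is a genuinely different and more general route. The antilinear map $\ket{m}\mapsto\ket{\tilde m}$ satisfies $VHV^{-1}=H^\dagger$ for every diagonalizable $H$, degenerate or not, and your check on the basis $\ket{\tilde m}$ is correct. It never uses PT-invariance (the labeling you choose is never used), so it shows the existence claim does not really depend on PT symmetry. The paper's route instead ties $V$ to the physical $PT$ and the metric, which is the structure exploited in the two-level example ($V=\Theta PT$ and $V_{\mathrm{b}}=\tau PT$).

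However, the formula you display is wrong. By the adjoint of the completeness relation, $\sum_n\ket{\tilde n}\bra{n}=\bigl(\sum_n\ket{n}\bra{\tilde n}\bigr)^\dagger=1$. Your $V$ therefore collapses to the conjugation itself, which fixes $\ket{m}$ rather than sending it to $\ket{\tilde m}$. The correct expression is $V=\sum_n\ket{\tilde n}\bra{\tilde n}\,\Theta$, using $\braket{\tilde n}{m}=\delta_{nm}$. This is the paper's positive metric composed with your conjugation. In the unbroken nondegenerate case, with phases fixed so that $PT\ket{n}=\ket{n}$, your conjugation equals $PT$ and you recover the paper's $V=\Theta PT$. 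Use a different symbol for the conjugation, since $\Theta$ already denotes that metric in the paper.
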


\begin{proof}
  Choosing $V$ as $V = \tau P T$, the statements follow immediately. 
  \begin{gather}
    V H V^{-1} = \tau P T H (P T)^{-1} \tau^{-1} = \tau H \tau^{-1} = H^{\dagger}, \\
     V \ket{n} = \tau P T \ket{n} \propto \tau \ket{n'} = \ket{\tilde{n}}.
  \end{gather}
\end{proof}

\subsection{Example: two sites model}
Consider the Hamiltonian in the two level system
\ba
H=
\left(
\begin{array}{cc}
i\gamma & -W \\
-W & -i\gamma \\
\end{array}
\right).
\ea
Here we assume $\gamma > 0$ and $W > 0$. 
This is invariant under the PT transformation with 
\ba
P=
\left(
\begin{array}{cc}
0& 1 \\
 1 & 0 \\
\end{array}
\right),
\ea
and $T$ acts as the complex conjugation.

Two right eigenvalues are
\footnote{
    We promise that for $\gamma > W$, we choose $\sqrt{W^2 - \gamma^2} = + i \sqrt{\gamma^2 - W^2}$ sign. 
}
\begin{align}
  \ket{n_1} = \frac{1}{\sqrt{2} W} \begin{pmatrix}
     W \\ i \gamma + \sqrt{W^2 - \gamma^2}
  \end{pmatrix} , \qquad E_1 = - \sqrt{W^2 - \gamma^2} \\
  \ket{n_2} = \frac{1}{\sqrt{2} W} \begin{pmatrix}
     W \\ i \gamma - \sqrt{W^2 - \gamma^2}
  \end{pmatrix}, \qquad E_2 = + \sqrt{W^2 - \gamma^2}
\end{align}
Thus, the system is PT-invariant when $W > \gamma$, while it is PT-broken for $W < \gamma$.
Note that when $W = \gamma$, not only two eigenvalues but also two eigenvectors are degenerate, so that Hamiltonian is not diagonalizable. 
It is called the exceptional point, and two phases (PT-invariant and PT-broken) are separated by this point. 
Note that corresponding left eigenvectors are respectively
\begin{align}
  \bra{\tilde{n}_1} &= \frac{1}{\sqrt{2 (W^2 - \gamma^2)}} \begin{pmatrix}
    - i \gamma + \sqrt{W^2 - \gamma^2}, & W
  \end{pmatrix}, \\
  \bra{\tilde{n}_2} &= \frac{1}{\sqrt{2  (W^2 - \gamma^2)}} \begin{pmatrix}
    i \gamma + \sqrt{W^2 - \gamma^2}, & - W
  \end{pmatrix},  
\end{align}
which indeed satisfy the orthonormal condition \eqref{biorth}.

\subsubsection{PT-invariant phase}
For $W > \gamma$ case, $n_1' = n_1$ and $n_2' = n_2$. 
We can construct $\Theta$ operator as
\begin{equation}
  \begin{aligned}
    \Theta &= \ket{\tilde{n}_1} \bra{\tilde{n}_1} + \ket{\tilde{n}_2} \bra{\tilde{n}_2} \\
    &= \frac{W}{W^2 - \gamma^2} \begin{pmatrix}
       W & - i \gamma \\ i \gamma & W
    \end{pmatrix}. 
  \end{aligned}
\end{equation}
We can easily check that $\Theta$ indeed satisfies Eq.\eqref{quasiH}, and its eigenvalues are $\frac{W (W \pm \gamma)}{W^2 - \gamma^2} > 0$, so $\Theta$ is positive definite. 
$V$ operator is defined as 
\begin{equation}
  V = \Theta P T = \frac{W}{W^2 - \gamma^2} \begin{pmatrix}
     i \gamma & W \\ W & - i \gamma
  \end{pmatrix} T . 
\end{equation}
We can check Eq.\eqref{vhvh} as follows:
\begin{align}
  &\begin{aligned}
    V \ket{n_1} 
    &= \frac{1}{\sqrt{2 (W^2 - \gamma^2)}} \begin{pmatrix}
      W \\ -i \gamma + \sqrt{W^2 - \gamma^2}
    \end{pmatrix} \\
    &= \frac{-i \gamma + \sqrt{W^2 - \gamma^2}}{W} \ket{\tilde{n}_1}
  \end{aligned}\\
  &\begin{aligned}
    V \ket{n_2} &= \frac{1}{\sqrt{2 (W^2 - \gamma^2)}} \begin{pmatrix}
      W \\ i \gamma + \sqrt{W^2 - \gamma^2}
    \end{pmatrix} \\
    &= \frac{-i \gamma - \sqrt{W^2 - \gamma^2}}{W} \ket{\tilde{n}_2}
  \end{aligned}
\end{align}

\subsubsection{PT-broken phase}
For $W < \gamma$ case, $n_1' = n_2$ and $n_2' = n_1$ respectively. 
We can construct $\tau$ operator as
\begin{equation}
  \begin{aligned}
    \tau &= \ket{\tilde{n}_2} \bra{\tilde{n}_1} + \ket{\tilde{n}_1} \bra{\tilde{n}_2} \\
    &= \frac{W}{\gamma^2 - W^2} \begin{pmatrix}
      -W & - i \gamma \\ i \gamma & - W
    \end{pmatrix}. 
  \end{aligned}
\end{equation}
This operator indeed satisfies Eq.\eqref{psedef}, and its eigenvalues are $\frac{W (-W \pm \gamma)}{\gamma^2 - W^2}$. 
One is positive and the other is negative, so that $\tau$ is not positive definite. 
$V$ operator is defined as 
\begin{equation}
  V_{\mathrm{b}} = \tau P T = \frac{W}{\gamma^2 - W^2} \begin{pmatrix}
    - i \gamma & - W \\ - W & i \gamma
  \end{pmatrix} T . 
\end{equation}
We can explicitly check Eq.\eqref{vhvh} as
\begin{align}
  &\begin{aligned}
    V_{\mathrm{b}} \ket{n_1} 
    &= \frac{1}{\sqrt{2 (\gamma^2 - W^2)}} \begin{pmatrix}
      i W \\ \gamma + \sqrt{\gamma^2 - W^2}
    \end{pmatrix} \\
    &= - i \frac{ \gamma + \sqrt{\gamma^2 - W^2}}{W} \ket{\tilde{n}_1}
  \end{aligned}\\
  &\begin{aligned}
    V_{\mathrm{b}} \ket{n_2} &= \frac{1}{\sqrt{2 (\gamma^2 - W^2)}} \begin{pmatrix}
      - i W \\ - \gamma + \sqrt{\gamma^2 - W^2}
    \end{pmatrix} \\
    &= i \frac{- \gamma + \sqrt{W^2 - \gamma^2}}{W} \ket{\tilde{n}_2}
  \end{aligned}
\end{align}

\subsection{Path-integral and partition function}
Consider the Euclidean path-integral in the setup of non-hermitian system. 
For a scalar field $\phi$, the partition function based the Euclidean path-integral looks like
\ba
Z(\beta)=\int d\phi\la \phi|e^{-\beta H}|\phi\lb. 
\ea
We can rewrite this by using $\sum_n |n\lb\la\tilde{n}|=1$ as
\ba
Z(\beta)&=&\sum_{n,m}\int d\phi \la \phi|n\lb\la \tilde{n}|e^{-\beta H}|m\lb\la\tilde{m}|\phi\lb\no
&=&\sum_n \la \tilde{n}|e^{-\beta H}|n\lb\no
&=&\sum_n e^{-\beta E_n}. \label{zebe}
\ea
Thus it coincides with $\mbox{Tr}[e^{-\beta H}]$. 
This argument is applicable to any non-hermitian $H$.
We remark that different from usual Hermitian case, this partition function does not necessarily describe thermal equilibrium of the system. 

For PT-symmetric Hamiltonian, we can explicitly check that Eq.\eqref{zebe} always be real number, while it might be negative. 
\begin{equation}
  Z (\beta) = \sum_n e^{- \beta E_n} + \sum_m 2 e^{- \beta \, \Re E_m} \cos(\beta \, \Im E_m) \label{zpts}
\end{equation}
Here we labeled the real eigenenergies by $n$ and complex conjugate pairs by $m$. 
One may worry that in the main text, we got a free energy with complex values 
, which implies $Z = e^{- \beta F}$ also becomes complex. 
However, this is not the case. 
A complex free energy arises form the solution of Einstein equation with complex metric. 
We chose one solution, but the complex conjugate of it also satisfies the equation. 
We expect these two solutions contribute to Euclidean path integral with the same weight, as in Eq.\eqref{zpts}. 
Therefore the reality of partition function also remains in our holographic model. 

\bibliography{PT.bib}


\end{document}